\documentclass[11pt, runningheads]{llncs}

\usepackage{amsmath}
\usepackage{graphicx}
\usepackage{subfigure}
\usepackage{algorithm}
\usepackage{algorithmic}

\textwidth6.5in \textheight9.05in \oddsidemargin 0pt \evensidemargin 0pt
\topmargin -47pt

\newcommand{\removed}[1]{}

\title{Passively Mobile Communicating Logarithmic Space Machines\thanks{This work has been partially supported by the ICT Programme of the European Union under contract number ICT-2008-215270 (\textsf{FRONTS}).}}

\author{Ioannis Chatzigiannakis\inst{1,2} \and Othon Michail\inst{1,2} \and Stavros Nikolaou\inst{2} \and \\ Andreas Pavlogiannis\inst{2} \and Paul G. Spirakis\inst{1,2}}
\institute{Research Academic Computer Technology Institute (RACTI), Patras, Greece\and Computer Engineering and Informatics Department (CEID), University of Patras, Patras, Greece.\\
Email:\email{ \{ichatz, michailo, spirakis\}@cti.gr, \{snikolaou, paulogiann\}@ceid.upatras.gr}}

\titlerunning{Passively Mobile Communicating Logarithmic Space Machines}

\begin{document}

\maketitle

\begin{abstract}
We propose a new theoretical model for passively mobile Wireless Sensor Networks. We call it the \emph{PALOMA} model, standing for PAssively mobile LOgarithmic space MAchines. The main modification w.r.t. the Population Protocol model \cite{AADFP06} is that agents now, instead of being automata, are Turing Machines whose memory is logarithmic in the population size $n$. Note that the new model is still \emph{easily implementable} with current technology. We focus on \emph{complete communication graphs}. We define the complexity class $PLM$, consisting of all symmetric predicates on input assignments that are stably computable by the PALOMA model. We assume that the agents are initially \emph{identical}. Surprisingly, it turns out that the PALOMA model \emph{can assign unique consecutive ids to the agents and inform them of the population size}! This allows us to give a direct simulation of a \emph{Deterministic} Turing Machine of $\mathcal{O}(n\log n)$ space, thus, establishing that any symmetric predicate in $SPACE(n\log n)$ also belongs to $PLM$. We next prove that the PALOMA model can simulate the Community Protocol model \cite{GR09}, thus, improving the previous lower bound to all symmetric predicates in $NSPACE(n\log n)$. Going one step further, we generalize the simulation of the deterministic TM to prove that the PALOMA model can simulate a \emph{Nondeterministic} TM of $\mathcal{O}(n\log n)$ space. Although providing the same lower bound, the important remark here is that the bound is now obtained in a \emph{direct manner}, in the sense that \emph{it does not depend on the simulation of a TM by a Pointer Machine}. Finally, by showing that a Nondeterministic TM of $\mathcal{O}(n\log n)$ space decides any language stably computable by the PALOMA model, we end up with an exact characterization for $PLM$: \emph{it is precisely the class of all symmetric predicates in} $NSPACE(n\log n)$.
\end{abstract}



\section{Introduction}

Theoretical models for Wireless Sensor Networks have received great attention over the past few years. Recently, Angluin et al. \cite{AADFP06} proposed the \emph{Population Protocol} (\emph{PP}) model. Their aim was to model sensor networks consisting of tiny computational devices with sensing capabilities that follow some unpredictable and uncontrollable mobility pattern. Due to the minimalistic nature of their model, the class of computable predicates was proven to be fairly small: it is the class of \emph{semilinear predicates}, thus, not including multiplication of variables, exponentiations, and many other important operations on input variables. Moreover, Delporte-Gallet et al. \cite{DFGR06} showed that PPs can tolerate only $\mathcal{O}(1)$ crash failures and not even a single Byzantine agent.

The work of Angluin et al. shed light and opened the way towards a brand new and very promising direction. The lack of control over the interaction pattern, as well as its inherent nondeterminism, gave rise to a variety of new theoretical models for WSNs. Those models draw most of their beauty precisely from their inability to organize interactions in a convenient and predetermined way. In fact, the Population Protocol model was the minimalistic starting-point of this area of research. Most efforts are now towards strengthening the model of Angluin et al. with extra realistic and implementable assumptions, in order to gain more computational power and/or speed-up the time to convergence and/or improve fault-tolerance.

The \emph{Mediated Population Protocol} (\emph{MPP}) model of \cite{CMS09-2} was based on the assumption that the agents can read and write pairwise information on some global storage, like e.g. a base station. Each ordered pair of agents that is permitted to interact occupies a memory slot of fixed length. The MPP model is strictly stronger than the PP model, since it can handle multiplication of variables, and \emph{MP}, which is the class of symmetric predicates stably computable by MPP, is contained in $NSPACE(m)$, where $m$ denotes the number of edges of the communication graph. Unfortunately, as partly indicated by \cite{CMS09-3} and \cite{CMS09-4}, our knowledge of this model is quite restricted yet.

Guerraoui and Ruppert \cite{GR09} made another natural assumption: they equipped the agents with read-only unique identifiers picked from an infinite set of ids. They named their model the \emph{Community Protocol} model. Each agent has its own unique id and can store up to a constant number of other agents' ids. In this model, agents are only allowed to compare ids, that is, no other operation on ids is permitted. It was proven that the corresponding class consists of all symmetric predicates in $NSPACE(n\log n)$, where $n$ is the community size. The proof was based on a simulation of a modified version of Nondeterministic Sch\" onhage's \emph{Storage Modification Machine} (\emph{NSMM}). It was additionally shown that if faults cannot alter the unique ids and if some necessary preconditions are satisfied, then community protocols can tolerate $\mathcal{O}(1)$ Byzantine agents.

In this work, we think of each agent as being a Turing Machine whose memory is logarithmic in the population size. Based on this realistic and implementable assumption, we propose a new theoretical model for passively mobile sensor networks, called the \emph{PALOMA} model. To be more precise, it is a model of PAssively mobile MAchines (that we keep calling agents) with sensing capabilities, equipped with two-way communication, and each having a memory of size LOgarithmic in the population size $n$. The reason for studying such an extension is that having \emph{logarithmic communicating machines} seems to be more natural than communicating automata of constant memory. First of all, the \emph{communicating machines} assumption is perfectly consistent with current technology (cellphones, iPods, PDAs, and so on). Moreover, \emph{logarithmic} is, in fact, \emph{extremely small}. For a convincing example, it suffices to mention that for a population consisting of $2^{266}$ agents, which is a number greater than the number of atoms in the observable universe, we only require each agent to have $266$ cells of memory (while small-sized flash memory cards nowadays exceed $16$GB of storage capacity)! Interestingly, as we shall see, it turns out that the agents are able to organize themselves into a distributed nondeterministic TM that makes full use of the agents' memories! The TM draws its nondeterminism by the nondeterminism inherent in the interaction pattern.

\subsection{Other Previous Work}

Much work concerning the Population Protocol model has been devoted to establishing that the class of computable predicates is precisely the class of \emph{semilinear predicates} \cite{AADFP06,AAER07}. Moreover, in \cite{AADFP06}, the \emph{Probabilistic Population Protocol} model was proposed, in which the scheduler selects randomly and uniformly the next pair to interact. Some recent work has concentrated on performance, supported by this random scheduling assumption (see e.g. \cite{AAE08}). \cite{CDFMS09} proposed a generic definition of probabilistic schedulers and a collection of new fair schedulers, and revealed the need for the protocols to adapt when natural modifications of the mobility pattern occur. \cite{BCCGK09,CS08} considered a huge population hypothesis (population going to infinity), and studied the dynamics, stability and computational power of probabilistic population protocols by exploiting the tools of continuous nonlinear dynamics. Moreover, several extensions of the basic model have been proposed in order to more accurately reflect the requirements of practical systems. In \cite{AACFJP05}, Angluin et al. studied what properties of restricted communication graphs are stably computable, gave protocols for some of them, and proposed an extension of the model with \textit{stabilizing inputs}. In \cite{CMS09-3}, MPP's ability to decide graph properties was studied and it was proven that connectivity is undecidable. Unfortunately, the class of decidable graph languages by MPP remains open. Some other works incorporated agent failures \cite{DFGR06} and gave to some agents slightly increased computational power \cite{BCMRR07} (heterogeneous systems). Recently, Bournez et al. \cite{BCCK08} investigated the possibility of studying population protocols via game-theoretic approaches. For an excellent introduction to the subject of population protocols see \cite{AR07} and for some recent advances mainly concerning mediated population protocols see \cite{CMS09-4}.

\section{Our Results - Roadmap}

In Section \ref{sec:mod}, we begin with a formal definition of the PALOMA model. The section proceeds with a thorough description of the systems' functionality and then provides definitions of \emph{configurations} and \emph{fair executions}. In Section \ref{sec:pred}, \emph{stable computation} of \emph{symmetric predicates on input assignments} is defined. Then the complexity classes $SSPACE(f(n))$, $SNSPACE(f(n))$ (symmetric predicates in $SPACE(f(n))$ and $NSPACE(f(n))$, respectively), and $PLM$ (stably computable predicates by the PALOMA model) are defined, and the section concludes with two examples of stably computable predicates (Subsection \ref{subsec:mult}). Both those predicates are non-semilinear, establishing that the PALOMA model is computationally stronger than the population protocol model. We next study more systematically the computational power of the new model, seeking for an exact characterization for the class $PLM$. We first study the PPALOMA model, a generalization of the PALOMA model that additionally knows the population size $n$ (see Section \ref{sec:pops}). Then, in Section \ref{sec:uids} we focus on another generalization of the PALOMA model, called the IPALOMA model, in which the agents not only know the population size but additionally have already been assigned the unique consecutive ids $\{0,1,\ldots,n-1\}$. We define the corresponding class of stably computable predicates that we call $IPLM$. We begin by studying the relationship between $IPLM$ and $PLM$. Unexpectedly, it turns out that $PLM=IPLM$ (Theorem \ref{the:iplm})! The reason is that the PALOMA model can inform the agents of the population size and assign unique consecutive ids to the agents via an \emph{iterative reinitiation process}. We next prove that $SSPACE(n\log n)$ is a subset of $IPLM$, by showing that in the IPALOMA model the agents can easily organize themselves into a deterministic Turing Machine of $\mathcal{O}(n\log n)$ space.  Thus, the inclusions $SSPACE(n\log n)\subseteq$ $IPLM=PLM$ provide us with a first lower bound for $PLM$. In Section \ref{sec:imlowPLM}, $PLM$'s lower bound is improved to $SNSPACE(n\log n)$. In particular, it is proven that the PALOMA model simulates Guerraoui's and Ruppert's Community Protocol model \cite{GR09}. Unfortunately, the proof of the lower bound on the computational power of the Community Protocol model in \cite{GR09} entirely depends on the simulation of a TM by a Storage Modification Machine, and this dependence is carried to our result. In order to avoid this, in Subsection \ref{subsec:ntm} we go one step further and, by exploiting the techniques of Theorem \ref{the:iplm} (Section \ref{sec:uids}), we show that the PALOMA model can directly simulate a nondeterministic TM of $\mathcal{O}(n\log n)$ space. Moreover, in Section \ref{sec:exact} we show that $SNSPACE(n\log n)$ is an exact characterization for $PLM$, by proving that the corresponding language of any predicate that is stably computable by the PALOMA model can be decided by a nondeterministic TM of $\mathcal{O}(n\log n)$ space. To summarize, the main result of this work is that $PLM$ is equal to $SNSPACE(n\log n)$. Finally, in Section \ref{sec:conc} we conclude and discuss some future research directions.

\section{The Model} \label{sec:mod}

In this section, we formally define the PALOMA model and describe its functionality. In what follows, we denote by $G=(V,E)$ the (directed) communication graph: $V$ is the set of agents, or \emph{population}, and $E$ is the set of permissible ordered pairwise interactions between these agents. We provide definitions for general communication graphs, although in this work we deal with complete communication graphs only. We generally denote by $n$ the population size (i.e. $n\equiv |V|$).

\begin{definition}
A \emph{PALOMA} protocol $\mathcal{A}$ is a 7-tuple $(\Sigma,X,\Gamma,Q,\delta,\gamma,q_0)$ where $\Sigma$, $X$, $\Gamma$ and $Q$ are all finite sets and
\begin{enumerate}
\item $\Sigma$ is the \emph{input alphabet}, where $\#, \sqcup\notin \Sigma$,
\item $X\subseteq\Sigma^{*}$ is the \emph{set of input strings},
\item $\Gamma$ is the \emph{tape alphabet}, where $\#, \sqcup\in \Gamma$ and $\Sigma\subset \Gamma$,
\item $Q$ is the set of \emph{states},
\item $\delta:Q\times\Gamma\rightarrow Q\times\Gamma\times\{L,R\}\times\{0,1\}$ is the \emph{internal transition function},
\item $\gamma:Q\times Q\rightarrow Q\times Q$ is the \emph{external transition function} (or \emph{interaction transition function}), and
\item $q_0\in Q$ is the \emph{initial state}.
\end{enumerate}
\end{definition}

Each agent is equipped with the following:
\begin{itemize}
\item A \emph{sensor} in order to sense its environment and receive a piece of the input (which is an input string from $X$).
\item A \emph{tape} (memory) consisting of $\mathcal{O}(\log n)$ cells. The tape is partitioned into three parts each consisting of $\mathcal{O}(\log n)$ cells: the leftmost part is the \emph{working tape}, the middle part is the \emph{output tape}, and the rightmost part is the \emph{message tape} (we call the parts ``tapes'' because such a partition is equivalent to a 3-tape machine). The last cell of each part contains permanently the symbol $\#$ (we assume that the machine never alters it); it is the symbol used to separate the three tapes and to mark the end of the overall tape.
\item A \emph{control unit} that contains the state of the agent and applies the transition functions.
\item A \emph{head} that reads from and writes to the cells and can move one step at a time, either to the left or to the right.
\item A binary \emph{working flag} either set to $1$ meaning that the agent is \emph{working} internally or to $0$ meaning that the agent is \emph{ready} for interaction.
\end{itemize}

Initially, all agents are in state $q_0$ and all their cells contain the \emph{blank symbol} $\sqcup$ except for the last cell of the working, output, and message tapes that contain the \emph{separator} $\#$. We assume that all agents concurrently receive their sensed input (different agents may sense different data) as a response to a global start signal. The input is a string from $X$ and after reception (or, alternatively, during reception, in an online fashion) it is written symbol by symbol on their working tape beginning from the leftmost cell. During this process the working flag is set to 1 and remains to 1 when this process ends (the agent may set it to 0 in future steps).

When its working flag is set to 1 we can think of the agent working as a usual Turing Machine (but it additionally writes the working flag). In particular, whenever the working flag is set to 1 the internal transition function $\delta$ is applied, the control unit reads the symbol under the head and its own state and updates its state and the symbol under the head, moves the head one step to the left or to the right and sets the working flag to 0 or 1, according to the internal transition function.

We assume that the set of states $Q$ and the tape alphabet $\Gamma$, are both sets whose size is fixed and independent of the population size (i.e. $|Q|=|\Gamma|=\mathcal{O}(1)$), thus, there is, clearly, enough room in the memory of an agent to store both the internal and the external transition functions.

As it is common in the population protocol literature, a \emph{fair adversary scheduler} selects ordered pairs of agents (edges from $E$) to interact. Assume now that two agents $u$ and $\upsilon$ are about to interact with $u$ being the \emph{initiator} of the interaction and $\upsilon$ being the \emph{responder}. Let $f:V\rightarrow \{0,1\}$ be a function returning the current value of each agent's working flag. If at least one of $f(u)$ and $f(\upsilon)$ is equal to $1$, then nothing happens, because at least one agent is still working internally. Otherwise ($f(u)=f(\upsilon)=0$), both agents are ready and an \emph{interaction} is established. In the latter case, the external transition function $\gamma$ is applied, the states of the agents are updated accordingly, the message of the initiator is copied to the message tape of the responder (replacing its contents) and vice versa (the real mechanism would require that each receives the other's message and then copies it to its memory, because instant replacement would make them lose their own message, but this can be easily implemented with $\mathcal{O}(\log n)$ extra cells of memory, so it is not an issue), and finally the working flags of both agents are again set to 1. 

Note that the assumption that the internal transition function $\delta$ is only applied when the working flag is set to 1 is weak. In fact, an equivalent way to model this is to assume that $\delta$ is of the form $\delta:Q\times\Gamma\times\{0,1\}\rightarrow Q\times\Gamma\times\{L,R,S\}\times\{0,1\}$, that it is always applied, and that for all $q\in Q$ and $a\in\Gamma$, $\delta(q,a,0)=(q,a,S,0)$ is satisfied, where $S$ means that the head ``stays put''. The same holds for the assumptions that $\gamma$ is not applied if at least one of the interacting agents is working internally and that the working flags are set to $1$ when some established interaction comes to an end; it is equivalent to an extended $\gamma$ of the form $\gamma:Q^2\times \{0,1\}^2\rightarrow Q^2\times  \{0,1\}^2$, that is applied in every interaction, and for which $\gamma(q_1,q_2,f_1,f_2)=(q_1,q_2,f_1,f_2)$ if $f_1=1$ or $f_2=1$, and $\gamma(q_1,q_2,f_1,f_2)=(\gamma_1(q_1,q_2),\gamma_2(q_1,q_2),1,1)$ if $f_1=f_2=0$, hold for all $q_1,q_2\in Q$, and we could also have further extended $\gamma$ to handle the exchange of messages, but for sake of simplicity we have decided to leave such details out of the model.

Since each agent is a TM (of logarithmic memory), we use the notion of a configuration to capture its ``state''. An \emph{agent configuration} is a quadruple $(q,l,r,f)$, where $q\in Q$, $l,r\in \Gamma^{\mathcal{O}(\log n)}=\{s\in \Gamma^{*}\;|\; |s|=\mathcal{O}(\log n)\}$, and $f\in \{0,1\}$. $q$ is the state of the control unit, $l$ is the string to the left of the head (including the symbol scanned), $r$ is the string to the right of the head, and $f$ is the working flag indicating whether the agent is ready to interact ($f=0$) or carrying out some internal computation ($f=1$). Let $\mathcal{B}$ be the set of all agent configurations. Given two agent configurations $A,A^{\prime}\in \mathcal{B}$, we say that $A$ \emph{yields} $A^{\prime}$ if $A^{\prime}$ follows $A$ by a single application of $\delta$.

A \emph{population configuration} is a mapping $C:V\rightarrow \mathcal{B}$, specifying the agent configuration of each agent in the population. Let $C$, $C^{\prime}$ be population configurations and let $u\in V$. We say that $C$ \emph{yields} $C^{\prime}$ via \emph{agent transition} $u$, denoted $C \stackrel{u}\rightarrow C^{\prime}$, if  $C(u)$ yields $C^{\prime}(u)$ and $C^{\prime}(w)=C(w)$, $\forall w\in V-\{u\}$.

Let $q(A)$ denote the state of an agent configuration $A$, $l(A)$ its string to the left of the head including the symbol under the head, $r(A)$ its string to the right of the head, and $f(A)$ its working flag. Given two population configurations $C$ and $C^{\prime}$, we say that $C$ \emph{yields} $C^{\prime}$ via \emph{encounter} $e=(u,\upsilon)\in E$, denoted $C\stackrel{e}\rightarrow C^{\prime}$, if one of the following two cases holds:\\

\noindent Case 1:
\begin{itemize}
\item $f(C(u))=f(C(\upsilon))=0$ which guarantees that both agents $u$ and $\upsilon$ are ready for interaction under the population configuration $C$.
\item $r(C(u))$ and $r(C(\upsilon))$ are precisely the message strings of $u$ and $\upsilon$, respectively (this is a simplifying assumption stating that when an agent is ready to interact its head is over the last $\#$ symbol, just before the message tape),
\item $C^{\prime}(u) = (\gamma_1(q(C(u)),q(C(\upsilon))),l(C(u)), r(C(\upsilon)), 1)$,
\item $C^{\prime}(\upsilon) = (\gamma_2(q(C(u)),q(C(\upsilon))),l(C(\upsilon)), r(C(u)), 1)$, and
\item $C^{\prime}(w)=C(w)$, $\forall w\in V-\{u,\upsilon\}$.
\end{itemize}

\noindent Case 2:
\begin{itemize}
\item $f(C(u))=1$ or $f(C(\upsilon))=1$, which means that at least one agent between $u$ and $\upsilon$ is working internally under the population configuration $C$, and
\item $C^{\prime}(w) = C(w)$, $\forall w\in V$. In this case no effective interaction takes place, thus the population configuration remains the same.
\end{itemize}

Generally, we say that $C$ \emph{yields} (or \emph{can go in one step to}) $C^{\prime}$, and write $C\rightarrow C^{\prime}$, if $C\stackrel{e}\rightarrow C^{\prime}$ for some $e\in E$ (via encounter) or $C\stackrel{u}\rightarrow C^{\prime}$ for some $u\in V$ (via agent transition), or both. We say that $C^{\prime}$ is \emph{reachable} from $C$, and write $C\stackrel{*}\rightarrow C^{\prime}$ if there is a sequence of population configurations $C=C_0,C_1,\ldots,C_t=C^{\prime}$ such that $C_i\rightarrow C_{i+1}$ holds for all $i\in \{0,1,\ldots,t-1\}$. An \emph{execution} is a finite or infinite sequence of population configurations $C_0,C_1\dots$, so that $C_i\rightarrow C_{i+1}$. An infinite execution is \emph{fair} if for all population configurations $C$, $C^{\prime}$ such that $C\rightarrow C^{\prime}$, if $C$ appears infinitely often then so does $C^{\prime}$. A \emph{computation} is an infinite fair execution.

\removed{

In order to produce non-trivial results it is vital to require that the adversary scheduler does not partition the network and that does not force the agents always interact at inconvenient times. An execution is said to be \emph{fair} if for any two network configurations $C$, $C^{\prime}$ such that $C$ yields $C^{\prime}$, if $C$ appears infinitely often then $C^{\prime}$ appears infinitely often too. Even though the notion of fairness deals with configurations, we require that a scheduler leads to fair executions without considering the state of each agent, but rather by choosing pairs of agents to interact. We only require that a scheduler is \emph{loosely fair}, meaning that every sequence consisting of interaction pairs and single agents and is of finite length will occur at some point. Note that since every finite sequence appears as a prefix in infinitely many others, eventually every such sequence will occur infinitely many times. In practice, we only require the presence of such interaction sequences that their length is bounded by a constant $k$. Indeed, once the output of each agent has stabilized, further interactions are of no use. However, we define loose fairness this way as $k$ is not known.

It is easy to verify that loosely fair scheduling guarantees fair execution, even though some interactions might get aborted due to the working flag restriction. Assuming that the internal computation of each agent lasts finitely long, there exists a finite interaction sequence that persists on a previously aborted interaction long enough until it finally takes place. Given the above, we define a \emph{computation} to be an infinite fair execution.

}

Note that the PALOMA model \emph{preserves uniformity}, because $X$, $\Gamma$ and $Q$ are all finite sets whose cardinality is independent of the population size. Thus, protocol descriptions have also no dependence on the population size. Moreover, PALOMA protocols are \emph{anonymous}, since initially all agents are identical and have no unique identifiers.


\section{Stably Computable Predicates} \label{sec:pred}

The predicates that we consider here are of the following form. The input to each agent is simply some string $s\in X$. An \emph{input assignment} $x$ is a mapping from $V$ to $X$ assigning an input string to each agent of the population. Let $\mathcal{X}=X^{V}=X^{n}$ be the set of all input assignments. Any mapping $p: \mathcal{X}\rightarrow \{0,1\}$ is a \emph{predicate on input assignments}. A predicate on input assignments $p$ is called \emph{symmetric} if for every $x=(s_1,s_2,\ldots,s_n)\in\mathcal{X}$ and any $x^{\prime}$ which is a permutation of $x$'s components, it holds that $p(x)=p(x^{\prime})$ (in words, permuting the input strings does not affect the predicate's outcome). A population configuration $C$ is called \emph{output stable} if for every configuration $C^{\prime}$ that is reachable from $C$ it holds that $O(C^{\prime}(u))=O(C(u))$ for all $u\in V$, where $O(C(u))$ denotes the contents of the output tape of agent $u$ under configuration $C$ (i.e. a string). In simple words, no agent changes its output in any subsequent step and no matter how the computation proceeds. A predicate on input assignments $p$ is said to be \emph{stably computable} by a PALOMA protocol $\mathcal{A}$ if, for any input assignment $x$, any computation of $\mathcal{A}$ contains an output stable configuration in which all agents have $p(x)$ written on their output tape.

We say that a predicate on input assignments $p$ belongs to $SPACE(f(n))$ ($NSPACE(f(n))$) if there exists some deterministic (nondeterministic, resp.) TM that gets $x\in \mathcal{X}$ as input (e.g. as a $n$-vector of input strings), if $p(x)=1$ accepts and if $p(x)=0$ rejects by using $\mathcal{O}(f(n))$ space.

\begin{remark}
All agents are identical and do not initially have unique ids, thus, stably computable predicates by the PALOMA model on complete communication graphs have to be symmetric.
\end{remark}

\begin{definition}
Let $SSPACE(f(n))$ and $SNSPACE(f(n))$ be $SPACE(f(n))$'s and $NSPACE($ $f(n))$'s restrictions to symmetric predicates, respectively.
\end{definition}

\begin{definition}
Let $PLM$ denote the class of all symmetric predicates that are stably computable by the PALOMA model.
\end{definition}
Our main result in this paper is the following exact characterization for $PLM$: $PLM$ is equal to $SNSPACE(n\log n)$.

\subsection{Two Examples} \label{subsec:mult}

We show here that there exist PALOMA protocols that stably compute the predicates $(N_c=N_a\cdot N_{b})$ ($N_q$ denotes the number of agents with input $q$) and $(N_1=2^{t})$, where $t\in \bbbz_{\geq 0}$.

\subsubsection{Multiplication of Variables} \label{subsubsec:mult1}

\noindent \\ \\ We begin by presenting a PALOMA protocol that stably computes the predicate $(N_c=N_a\cdot N_{b})$ (on the complete communication graph of $n$ nodes) that is, all agents eventually decide whether the number of $c$s in the input assignment is the product of the number of $a$s and the number of $b$s. We give a high-level description of the protocol.

Initially, all agents have one of $a$, $b$ and $c$ written on the first cell of their working memory (according to their sensed value). That is, the set of input strings is $X=\Sigma=\{a,b,c\}$. Each agent that receives input $a$ goes to state $a$ and becomes ready for interaction (sets its working flag to 0). Agents in state $a$ and $b$ both do nothing when interacting with agents in state $a$ and agents in state $b$. An agent in $c$ initially creates in its working memory three binary counters, the $a$-counter that counts the number of $a$s, the $b$-counter, and the $c$-counter, initializes the $a$ and $b$ counters to 0, the $c$-counter to 1, and becomes ready. When an agent in state $a$ interacts with an agent in state $c$, $a$ becomes $\bar{a}$ to indicate that the agent is now sleeping, and $c$ does the following (in fact, we assume that $c$ goes to a special state $c_{a}$ in which it knows that it has seen an $a$, and that all the following are done internally, after the interaction; finally the agent restores its state to $c$ and becomes again ready for interaction): it increases its $a$-counter by one (in binary), multiplies its $a$ and $b$ counters, which can be done in binary in logarithmic space (binary multiplication is in $LOGSPACE$), compares the result with the $c$-counter, copies the result of the comparison to its output tape, that is, 1 if they are equal and 0 otherwise, and finally it copies the comparison result and its three counters to the message tape and becomes ready for interaction. Similar things happen when a $b$ meets a $c$ (interchange the roles of $a$ and $b$ in the above discussion). When a $c$ meets a $c$, the responder becomes $\bar{c}$ and copies to its output tape the output bit contained in the initiator's message. The initiator remains to $c$, adds the $a$-counter contained in the responder's message to its $a$-counter, the $b$ and $c$ counters of the message to its $b$ and $c$ counters, respectively, multiplies again the updated $a$ and $b$ counters, compares the result to its updated $c$ counter, stores the comparison result to its output and message tapes, copies its counters to its message tape and becomes ready again. When a $\bar{a}$, $\bar{b}$ or $\bar{c}$ meets a $c$ they only copy to their output tape the output bit contained in $c$'s message and become ready again (eg $\bar{a}$ remains $\bar{a}$), while $c$ does nothing.

Note that the number of $c$s is at most $n$ which means that the $c$-counter will become at most $\lceil \log n\rceil$ bits long, and the same holds for the $a$ and $b$ counters, so there is enough room in the tape of an agent to store them.

\begin{theorem}
The above PALOMA protocol stably computes the predicate $(N_c=N_a\cdot N_{b})$.
\end{theorem}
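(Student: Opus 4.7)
Fix an input assignment with $N_a, N_b, N_c$ agents sensing $a, b, c$ respectively, and write $\#_q$ for the number of agents currently in state $q$. For $q \in \{a,b,c\}$, let $S_q$ denote the sum of the $q$-counters across all agents currently in state $c$. My proof would rest on the conservation invariant
\[
S_a + \#_a = N_a, \qquad S_b + \#_b = N_b, \qquad S_c = N_c,
\]
which holds at the initial configuration (each of the $N_c$ many $c$-state agents has counters $(0,0,1)$) and is preserved by every external transition. The only nontrivial cases are an $a$-agent meeting a $c$-agent (the $c$-agent's $a$-counter grows by one while $\#_a$ drops by one), the symmetric $b$--$c$ rule, and a $c$--$c$ merger in which the responder's three counters are transferred into the initiator's and the responder becomes $\bar{c}$, leaving $S_a,S_b,S_c$ intact. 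Interactions with $\bar{a},\bar{b},\bar{c}$ agents only copy the output bit of the $c$-agent into the passive partner and leave every counter untouched.

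Next I would exploit fairness on the complete graph to drive convergence. Since no rule can demote the last remaining $c$-state agent (the $c$--$c$ rule requires two), at least one $c$-state agent is always present when $N_c \ge 1$, so fairness guarantees that every $a$- or $b$-agent eventually meets a $c$-agent and becomes $\bar{a}$ or $\bar{b}$; hence $\#_a$ and $\#_b$ decrease monotonically to $0$. In parallel $\#_c$ is monotone non-increasing, and fairness repeatedly schedules $c$--$c$ encounters whenever $\#_c \ge 2$, so $\#_c$ eventually stabilises at $1$; call the survivor the \emph{leader}. Substituting these values into the invariant pins the leader's three counters at exactly $(N_a, N_b, N_c)$, and the leader's post-interaction internal computation then writes the truth value of $N_a \cdot N_b = N_c$, namely $p(x)$, on its output tape and message tape. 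The degenerate case $N_c = 0$ needs a separate remark (for instance, initialising each agent's output tape according to its sensed input), but this is orthogonal to the main argument.

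Finally I would argue output stabilisation and propagation. Once the leader's counters equal $(N_a,N_b,N_c)$, the only agents left are the leader itself and agents in states $\bar{a}, \bar{b}, \bar{c}$; by inspection the corresponding interaction rules leave the leader's state, counters, output tape, and message tape untouched, so $p(x)$ is genuinely frozen on the leader. Fairness then guarantees that every other agent meets the leader infinitely often thereafter and copies $p(x)$ into its own output tape, producing an output-stable configuration in which every agent outputs $p(x)$. The main obstacle throughout is precisely this stability step: one must rule out any lingering interaction that could disturb the leader's counters after convergence. This is exactly what the invariant provides, since $S_a + \#_a = N_a$ with $S_a = N_a$ forces $\#_a = 0$ (and symmetrically for $b$ and $c$), so no $a$-, $b$-, or rival $c$-agent can reappear to trigger a counter-modifying rule.
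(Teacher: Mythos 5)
Your proposal is correct and follows essentially the same route as the paper's proof: a unique $c$-agent eventually survives holding exactly the totals $(N_a,N_b,N_c)$, computes and records the truth value of $N_c=N_a\cdot N_b$, and broadcasts it to the $\bar{a},\bar{b},\bar{c}$ agents via its message tape, with your conservation invariant simply making explicit what the paper asserts informally. One small remark: the degenerate case $N_c=0$ that you flag is silently ignored by the paper as well (its argument starts from ``eventually only one agent in state $c$ will remain''), but your parenthetical fix of initialising outputs from the sensed input does not quite work, e.g.\ when $N_c=0$, $N_a=0$ and $N_b>0$ the predicate is true while the $b$-agents would be stuck outputting $0$, so a genuine treatment of that case needs extra machinery.
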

\begin{proof}
Given a fair execution, eventually only one agent in state $c$ will remain, its $a$-counter will contain the total number of $a$s, its $b$-counter the total number of $b$s, and its $c$-counter the total number of $c$s. By executing the multiplication of the $a$ and $b$ counters and comparing the result to its $c$-counter it will correctly determine whether $(N_c=N_a\cdot N_{b})$ holds and it will store the correct result (0 or 1) to its output and message tapes. At that point all other agents will be in one of the states $\bar{a}$, $\bar{b}$, and $\bar{c}$. All these, again due to fairness, will eventually meet the unique agent in state $c$ and copy its correct output bit (which they will find in the message they get from $c$) to their output tapes. Thus, eventually all agents will output the correct value of the predicate.
\qed
\end{proof}

\begin{corollary}
The PALOMA model is strictly stronger than the population protocol model.
\end{corollary}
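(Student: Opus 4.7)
The plan is to exhibit a single predicate that PALOMA computes but no population protocol can, and then to observe that PALOMA trivially subsumes the population protocol (PP) model, so the inclusion of computable classes is strict.

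First I would recall the characterization of Angluin et al.\ \cite{AADFP06,AAER07}: the predicates on input assignments stably computable by the PP model on complete communication graphs are exactly the \emph{semilinear} predicates, equivalently those first-order definable in Presburger arithmetic $(\mathbb{N},+,<,0,1)$. Next, I would observe that PALOMA trivially simulates PP: given any PP protocol $(Q,\Sigma,\delta,q_0)$, one obtains an equivalent PALOMA protocol by taking the external transition function $\gamma := \delta$, writing the input symbol directly into the state component $q_0$, ignoring the working/output/message tapes (e.g., having $\delta_{\text{PALOMA}}$ immediately set the working flag to $0$), and using a fixed cell of the output tape to carry the PP output convention. Hence every predicate in the PP class lies in $PLM$.

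Second, I would invoke the theorem just established, which shows that the predicate $(N_c = N_a\cdot N_b)$ is in $PLM$, and then argue that this predicate is \emph{not} semilinear. This is the key (but classical) step: the set $\{(N_a,N_b,N_c)\in\mathbb{N}^3 : N_c = N_a N_b\}$ is the graph of multiplication, which is well known not to be expressible in Presburger arithmetic; equivalently, it is not a finite union of linear subsets of $\mathbb{N}^3$. A short direct argument is that any semilinear subset of $\mathbb{N}^3$ has, along every direction of growth, ``density'' eventually linear in the coordinates, while the graph of multiplication fails this already on the family $\{(k,k,k^2) : k\in\mathbb{N}\}$ (no finite union of linear sets can pick out exactly the squares on the diagonal). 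Therefore $(N_c = N_a\cdot N_b)$ is in $PLM$ but not in the PP class.

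Combining the two directions yields strict inclusion, so PALOMA is strictly stronger than PP. The only non-routine ingredient is the non-semilinearity of multiplication, and that is standard; I would simply cite \cite{AADFP06,AAER07} for the semilinearity characterization and appeal to the classical undefinability of multiplication in Presburger arithmetic rather than reproving it.
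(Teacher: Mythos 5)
Your proposal is correct and follows essentially the same route as the paper: PALOMA simulates PP, and the previously established protocol for $(N_c=N_a\cdot N_b)$ together with the non-semilinearity of multiplication (via the semilinearity characterization of PP-computable predicates) gives strict inclusion. You merely spell out the simulation and the non-semilinearity argument in more detail than the paper's one-line proof.
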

\begin{proof}
PALOMA simulates PP and stably computes $(N_c=N_a\cdot N_{b})$ which is non-semilinear.
\qed
\end{proof}

\subsubsection{Power of 2} \label{subsubsec:pow}

\noindent \\ \\ We now present a PALOMA protocol that stably computes the predicate $(N_1=2^{t})$, where $t\in \bbbz_{\geq 0}$, on the complete communication graph of $n$ nodes, that is, all agents eventually decide whether the number of $1$s in the input assignment is a power of 2.

The set of input strings is $X=\Sigma=\{0,1\}$. All agents that receive 1 create a binary $1$-counter to their working tape and initialize it to $1$. Moreover, they create a binary $next\_pow\_of2$ block and set it to $2$. Finally, they write 1 (which is interpreted as ``true'') to their output tape, and copy the $1$-counter and the output bit to their message tape before going to state 1 and becoming ready. Agents that receive input 0 write 0 (which is interpreted as ``false'') to their output tape, go to state 0, and become ready. Agents in state 0 do nothing when interacting with each other. When an agent in state 0 interacts with an agent in state 1, then 0 simply copies the output bit of 1. When two agents in state 1 interact, then the following happen: the initiator sets its $1$-counter to the sum of the responder's $1$-counter and its own $1$-counter and compares its updated value to $next\_pow\_of2$. If it is less than $next\_pow\_of2$ then it writes 0 to the output tape. If it is equal to $next\_pow\_of2$ it sets $next\_pow\_of2$ to $2\cdot next\_pow\_of2$ and sets its output bit (in the output tape) to 1. If it is greater than $next\_pow\_of2$, then it starts doubling $next\_pow\_of2$ until $1$-counter $\geq next\_pow\_of2$ is satisfied. If it is satisfied by equality, then it doubles $next\_pow\_of2$ one more time and writes 1 to the output tape. Otherwise, it simply writes 0 to the output tape. Another implementation would be to additionally send $next\_pow\_of2$ blocks via messages and make the initiator set $next\_pow\_of2$ to the maximum of its own and the responder's $next\_pow\_of2$ blocks. In this case at most one doubling would be required. Finally, in both implementations, the initiator copies the output bit and the $1$-counter to its message tape (in the second implementation it would also copy $next\_pow\_of2$ to the message tape), remains in state $1$, and becomes ready. The responder simply goes to state $\bar{1}$ and becomes ready. An agent in state $\bar{1}$ does nothing when interacting with an agent in state 0 and vice versa. When an agent in state $\bar{1}$ interacts with an agent in state 1, then $\bar{1}$ simply copies the output bit of 1.

Note that $next\_pow\_of2$ can become at most 2 times the number of $1$s in the input assignment, and the latter is at most $n$. Thus, it requires at most $\lceil \log 2n\rceil$ bits of memory. Either way, we can delay the multiplication until another 1 appears, in which case we need at most $\lceil \log n\rceil$ bits of memory for storing $next\_pow\_of2$ (the last unnecessary multiplication will never be done).

\subsection{Computation Under Other Input Conventions} \label{subsec:conv}

The \emph{multiplicative integer input convention} assumes that upon initialization, each agent receives a value and a mark. An integer $N_x$ is retrieved by calculating the product $x_1\cdot x_2\dots x_n$ of the $x_i$ values stored at different agents marked as $x$. Under this assumption, there is a simple way to compute the predicate $N_c=N_a\cdot N_b$. Because the input convention matches the predicate operator, it is equivalent to compute the predicate $N_a=N_b$ instead.

Upon initialization each agents is marked as either $a$ or $b$ and is assigned a value ($value_a$ or $value_b$). Moreover, each  agent maintains a $flag$ variable indicating the correctness of the predicate. Initially, $flag=\bot$ for each agent.

Interactions between agents have the following effects:

\begin{description}
	\item[$\{a,b\}$] Then $value_a=\frac{value_a}{gcd(value_a,value_b)}$ and  $value_b=\frac{value_b}{gcd(value_a,value_b)}$. If after this update $value_a=value_b=1$ then $flag_a=flag_b=\top$. Otherwise, $flag_a=flag_b=\bot$
	\item[$\{a,a\}, \{b,b\}$] If either $flag$ is set to $\bot$ then both flags are set to $\bot$.
\end{description}

\begin{lemma}
$N_a=N_b$ iff for each configuration $C$ there exist $a_i$, $b_j$ such that $gcd(value_{a_i}, value_{b_j})\neq 1$ or $a_i=b_j=1 \forall i,j$
\end{lemma}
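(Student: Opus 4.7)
The plan is to isolate an arithmetic invariant maintained by the interaction rules and reduce both directions of the biconditional to it. Define the rational number
\[
R(C) \;=\; \frac{\prod_{i} value_{a_i}}{\prod_{j} value_{b_j}},
\]
where the products range over the agents marked $a$ and $b$ respectively in configuration $C$. The first step is to verify that $R$ is invariant under every interaction. The $\{a,a\}$ and $\{b,b\}$ rules only touch the $flag$ variables, so $R$ is trivially preserved. Under $\{a,b\}$, the pair $(value_{a_i},value_{b_j})$ is replaced by $(value_{a_i}/g,\,value_{b_j}/g)$ with $g=\gcd(value_{a_i},value_{b_j})$, and the contribution of this pair to $R$ changes from $value_{a_i}/value_{b_j}$ to $(value_{a_i}/g)/(value_{b_j}/g)$, i.e.\ stays the same. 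Hence $R(C)=R(C_0)=N_a/N_b$ for every reachable $C$.

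For the forward direction, assume $N_a=N_b$, so $R(C)=1$ in every reachable $C$. Fix such a $C$ and suppose the second disjunct fails, i.e.\ some $value_{a_i}>1$ or some $value_{b_j}>1$; by $R(C)=1$ one of these forces the other, so pick some $value_{a_i}>1$ and a prime $p\mid value_{a_i}$. From $\prod_i value_{a_i}=\prod_j value_{b_j}$, $p$ must divide the right-hand product, hence $p\mid value_{b_j}$ for some $j$. Then $\gcd(value_{a_i},value_{b_j})\geq p>1$, establishing the first disjunct.

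For the converse, suppose the disjunction holds in every reachable configuration, and assume towards contradiction that $N_a\neq N_b$. Consider any fair execution. Each $\{a,b\}$ interaction with $\gcd>1$ strictly decreases the nonnegative integer potential $\Phi(C)=\prod_i value_{a_i}\cdot\prod_j value_{b_j}$, while every other interaction leaves $\Phi$ unchanged. Hence only finitely many value-changing interactions can occur, so by fairness a configuration $C^*$ is reached in which $\gcd(value_{a_i},value_{b_j})=1$ for every pair $(i,j)$. Applying the hypothesis to $C^*$, the first disjunct fails there, so $value_{a_i}=value_{b_j}=1$ for all $i,j$; but then $R(C^*)=1$, which by the invariant contradicts $N_a/N_b=R(C^*)\neq 1$.

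The one delicate step is the third paragraph: I need to be careful that "for each configuration" really ranges over all reachable configurations (so that the argument can apply it to the specific $C^*$ produced by a fair execution), and that the potential argument isolates value-changing interactions from the no-op ones $\{a,a\},\{b,b\}$ and from $\{a,b\}$ interactions whose $\gcd$ already equals $1$. Once the invariant $R$ and the potential $\Phi$ are in place, the rest is essentially bookkeeping; the arithmetic content sits entirely in the prime-divisibility argument used in the forward direction.
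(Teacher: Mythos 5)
Your proof is correct and takes essentially the same route as the paper's: your invariant $R(C)$ is exactly the paper's implicit observation that every $\{a,b\}$ interaction divides both products by the same factor $X$ (so $N'_x=N_x/X$), your prime-divisibility step is the same arithmetic content as the paper's ``coprime to each factor implies coprime to the product'' argument, and your potential-plus-fairness step corresponds to the paper's claim that the divisions continue until all values equal $1$. Your version is in fact somewhat more careful than the paper's (which loosely compares $b_i$ against the original $N_a,N_b$ rather than the current products, and leaves the fairness/termination step implicit), but no genuinely different idea is involved.
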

\begin{proof}
\begin{description}
	\item[$\rightarrow$] Assume on the contrary that there is no such pair and that there exists a $b_i\neq 0$. Then, $\forall j$, $a_j$, $b_i$ are coprimes. There is a theorem stating that if $a$, $b_1$ are coprimes and $a,b_2$ are coprimes, then $a,b_1\cdot b_2$ are coprimes. Applying this theorem inductively we can prove that $N_a$ and $b_i$ are coprimes. Then $N_b$ is a multiple of $b_i$, while $N_a$ is not, thus $N_a$ and $N_b$ can't be equal.
	\item[$\leftarrow$] If there exist $a_i$, $b_j$ such that $gcd(value_{a_i}, value_{b_j})\neq 1$ or $a_i=b_j=1 \forall i,j$, then we have successfully divided $N_a$, $N_b$ with the same divisor $X=\prod_{\forall i,j}{gcd(value_{a_i}, value_{b_j})}$ until $a_i=b_j=1 \forall i,j$. The new integers represented by the  diffused $1$s are $N'_a=N_b=1$, and since $N'_x=\frac{N_x}{X}$, we conclude that $N_a=N_b$.
\end{description}
\qed
\end{proof}

\begin{lemma}
$N_a=N_b$ iff eventually for each agent $i$, $value_i=1$
\end{lemma}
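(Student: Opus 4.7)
The plan rests on one structural invariant together with a potential-function argument, using the previous lemma to guarantee progress.

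First I would establish the key invariant: whenever an $\{a,b\}$ interaction takes place, both $value_{a}$ and $value_{b}$ are divided by the same number $g = \gcd(value_{a},value_{b})$, so the products $N_a$ and $N_b$ are both divided by $g$. Hence the relation $N_a = N_b$ is preserved by every interaction (and trivially so by $\{a,a\}$ and $\{b,b\}$ interactions, which do not alter values). This gives the easy direction $(\Leftarrow)$: if eventually every agent carries $value_i=1$, then at that configuration $N_a = N_b = 1$, and by the invariant this equality held at the initial configuration as well.

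For the forward direction $(\Rightarrow)$, assume $N_a=N_b$ initially, so by the invariant $N_a=N_b$ holds at every reachable configuration. Define the potential $\Phi(C)=\sum_{i\in V} value_i(C)$, a positive integer bounded below by $n$. I claim $\Phi$ strictly decreases whenever a ``productive'' $\{a,b\}$ interaction occurs, i.e.\ one with $g=\gcd(value_a,value_b)\geq 2$: both values are replaced by $value_a/g$ and $value_b/g$, each strictly smaller than the original, so $\Phi$ drops. Now suppose, for contradiction, that the execution never reaches a configuration in which all values equal $1$. Then from some point on, every visited configuration has at least one agent with $value_i>1$. Apply the previous lemma to each such configuration: since $N_a=N_b$ and not all values are $1$, there must exist a pair $(a_i,b_j)$ with $\gcd(value_{a_i},value_{b_j})\neq 1$, i.e.\ a productive pair is always available. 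Fairness of the computation forces this pair (or another productive one reachable from it) to interact, strictly decreasing $\Phi$. Hence $\Phi$ would decrease infinitely often, contradicting the fact that $\Phi\geq n$ is integer-valued. Therefore some configuration with all $value_i=1$ is reached; and because a configuration with all ones is a fixed point of the update rule (any $\gcd$ is $1$, so no values change afterwards), the property persists forever.

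The main obstacle I anticipate is the rigorous use of fairness to argue that a productive pair actually interacts: fairness is stated in terms of yieldability of configurations, not in terms of particular edges being scheduled. The cleanest way to handle this is to observe that from any configuration with a productive pair $(a_i,b_j)$, the configuration $C'$ obtained by letting that pair interact satisfies $C\rightarrow C'$; fairness then guarantees that if we never escape the set of non-all-one configurations, infinitely many such one-step drops in $\Phi$ must occur, which is impossible. The remaining details (that the working-flag mechanism does not indefinitely postpone the relevant interaction, and that internal computations such as computing $\gcd$ terminate in $\mathcal{O}(\log n)$ space) are standard and follow from the model description in Section \ref{sec:mod}.
\qed
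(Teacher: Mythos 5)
Your proposal is correct and follows essentially the same route as the paper: the backward direction via the invariant that each $\{a,b\}$ interaction divides $N_a$ and $N_b$ by the same $\gcd$, and the forward direction via the previous lemma (a productive pair exists while some value exceeds $1$) together with a strict-decrease/termination argument. Your explicit potential function $\Phi=\sum_i value_i$ and the spelled-out fairness step are just a more careful rendering of the paper's terser ``each division cuts its operands at least in half, so the process must end'' argument.
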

\begin{proof}
	\item[$\rightarrow$] If $N_a=N_b$, then by the previous lemma $gcd(value_{a_i}, value_{b_j})\neq 1$ or $a_i=b_j=1 \forall i,j$. In the second case there is nothing to prove. In the first case, observe that each time $gcd(value_{a_i}, value_{b_j})\neq 1$ a division takes place that will cut its operands at least in half. In discrete context, this sequence of divisions must end when $a_i=b_j=1 \forall i,j$.
	\item[$\leftarrow$] If for each agent $i$, $value_i=1$ then we have successfully divided $N_a$, $N_b$ with the same divisor $X=\prod_{\forall i,j}{gcd(value_{a_i}, value_{b_j})}$ until $a_i=b_j=1 \forall i,j$. The new integers represented by the diffused $1$s are $N'_a=N_b=1$, and since $N'_x=\frac{N_x}{X}$, we conclude that $N_a=N_b$.
\qed
\end{proof}
\begin{lemma}
If $N_a=N_b$, eventually each agent $i$ will set $flag_i$ to $\top$
\end{lemma}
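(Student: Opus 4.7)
The plan is to leverage the preceding lemma, which asserts that if $N_a = N_b$ then in any fair execution every agent's value eventually becomes $1$. I would first observe that once $value_i = 1$ for all $i$, no subsequent interaction can ever modify a value: an $\{a,b\}$ interaction divides each operand by $\gcd(1,1)=1$, and same-type interactions do not touch values at all. So it suffices to analyse the flag dynamics on the invariant set $\mathcal{S}$ of configurations in which every value equals $1$.

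On $\mathcal{S}$, the key step is to identify the configuration $C^{\top}$ in which every flag equals $\top$ as an \emph{absorbing} configuration, and then to argue that it is \emph{reachable} from every configuration of $\mathcal{S}$. Absorption is immediate from the transition rules: an $\{a,b\}$ interaction leaves values at $1$ and therefore resets both flags to $\top$, while a $\{a,a\}$ or $\{b,b\}$ interaction triggers its reset-to-$\bot$ clause only when some flag is already $\bot$, which is not the case in $C^{\top}$. Reachability is witnessed by an explicit schedule: fix one agent of each type, say $a_1$ and $b_1$; pair $a_1$ successively with every $b$-agent (each such interaction sets that $b$'s flag to $\top$ while preserving $flag_{a_1}=\top$), then pair $b_1$ successively with every remaining $a$-agent. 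Every flag is turned to $\top$ without any intervening same-type interaction to disturb it.

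These two properties together with fairness finish the proof. Once the values have stabilised to $1$, the computation visits some configuration on $\mathcal{S}$ infinitely often; since $C^{\top}$ is reachable from every such configuration, fairness forces $C^{\top}$ to appear infinitely often; absorption then guarantees that once $C^{\top}$ is reached the computation stays there forever, so in particular every agent's flag is eventually (and permanently) $\top$. The main obstacle I anticipate is ruling out the perpetual flip-flopping of flags that same-type interactions could in principle cause; the absorbing-plus-reachable packaging sidesteps this without requiring an intricate potential argument. One mild caveat is the degenerate case where all agents share a single mark, which makes $\{a,b\}$ interactions impossible; this must either be treated separately or assumed away by the input convention.
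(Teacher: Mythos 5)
Your proof is correct and follows essentially the same route as the paper: invoke the previous lemma to get every $value_i$ to $1$, observe that $\{a,b\}$ interactions then only produce $\top$, and use fairness to drive all flags to $\top$ and keep them there. Your packaging of the fairness step as reachability of the absorbing all-$\top$ configuration is just a more rigorous rendering of the paper's informal ``delay the $\{a,a\}$/$\{b,b\}$ interactions'' argument, and the degenerate single-mark caveat you raise applies equally to the paper, which tacitly assumes both marks are present.
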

\begin{proof}
If $N_a=N_b$ then by the previous lemma each agent $i$ will have $value_i=1$. In that case, the $gcd$ between two values will always be $1$, thus interactions between $a$s and $b$s will only generate $\top$ values. Due to fairness, during the execution a sequence of interactions will occur in which any $\{a,a\}$ or $\{b,b\}$ interaction that would  generate $\bot$ values can be delayed so that the participant $i$ will first interact with an agent $-i$. This interaction will force $value_i$ to $\top$ thus when the $\{i,i\}$ interaction eventually takes place no $\bot$ values are present so no $\bot$ values are generated.
\qed
\end{proof}
\begin{lemma}
If $N_a\neq N_b$, eventually each agent $i$ will set $flag_i$ to $\bot$
\end{lemma}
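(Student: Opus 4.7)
The plan is to combine an invariant for the values, a potential argument that freezes those values, and a fairness-based propagation argument for the flags. First I would observe that every $\{a,b\}$ interaction divides $value_a$ and $value_b$ by the same gcd while same-type interactions touch no values; hence the rational $N_a/N_b$ is preserved throughout the execution, and in particular remains different from $1$ by hypothesis.

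Next I introduce the potential $\Phi = \sum_i \log(value_i) \geq 0$, which strictly decreases by $2\log g$ whenever an $\{a,b\}$ interaction has $gcd = g > 1$, so only finitely many value-reducing interactions can occur. Combined with two elementary facts --- once a pair is coprime it stays coprime (any divisor of a number coprime to $b$ is still coprime to $b$, and values only decrease), and fairness forces every opposite-type pair to interact at least once --- this yields a finite time $T$ after which every opposite-type pair is coprime, so no $\{a,b\}$ interaction past $T$ changes any value.

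Because $N_a/N_b \neq 1$, the frozen value configuration cannot have all values equal to $1$, so some agent $k$ has $value_k > 1$; by symmetry assume $k$ is of type $a$ (otherwise swap the roles of $a$ and $b$). From $T$ on, every $\{a,b\}$ interaction involving $k$ leaves $value_k > 1$, and so the rule sets $flag_k$ and the partner's flag to $\bot$. Fairness then delivers the conclusion in two patterns: every opposite-type agent $j$ eventually meets $k$ and is set to $\bot$ in one step; every same-type agent $m$ is set to $\bot$ via the realizable two-step sequence ``$k$ first meets some $b$-agent, then $m$ meets $k$'', since the second interaction is of type $\{a,a\}$ with $flag_k = \bot$ and therefore triggers the propagation rule $flag_m := \bot$.

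The main obstacle I anticipate is the fairness manipulation in the last step: one must justify that the described finite sequences are actually realized under the paper's fairness definition (every configuration reachable from an infinitely often recurring one itself recurs), rather than merely being possible in isolation. A related subtlety the plan does not by itself address is the stability of $\bot$ --- an agent with $value = 1$ can briefly revert to $\top$ through an $\{a,b\}$ interaction with another value-$1$ agent --- but the lemma as stated only asks that $\bot$ be reached, which the chain above accomplishes.
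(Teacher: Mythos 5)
Your proposal is correct for the lemma as literally stated, and its skeleton is the same as the paper's: identify an agent whose value becomes permanent and different from $1$, then use fairness to drive every flag to $\bot$. The differences are in how the two halves are filled in. Where you re-derive the freezing of values from scratch (ratio invariant $N_a/N_b$ plus the potential $\sum_i\log value_i$ and persistence of coprimality), the paper simply invokes its preceding lemma ($N_a=N_b$ iff eventually all values equal $1$, together with monotonicity of the integer values) to obtain an agent $b_i$ with permanent value $\neq 1$; your potential argument is essentially the same halving idea made explicit, so this buys rigor rather than novelty. For the propagation phase the routes genuinely differ: the paper has all opposite-type agents interact ``consecutively'' with $b_i$ and then asserts that no $\top$ is ever assigned afterwards, leaving the remaining same-type agents implicit, whereas you handle them explicitly through the $\{a,a\}/\{b,b\}$ $\bot$-propagation rule via a two-step fair sequence (special agent meets an opposite-type agent, then the same-type agent meets it); this is cleaner and, under the paper's configuration-based fairness (reachable configurations of infinitely recurring ones recur, and the configuration space is finite), it is fully justifiable, so the obstacle you flag is benign---indeed the paper's own ``consecutively interact'' step needs exactly the same justification. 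Finally, your closing observation is a real point of divergence: the paper's last sentence claims that after this point no $\top$ is ever assigned, but as you note, two opposite-type agents both holding value $1$ re-create $\top$ whenever they meet (and under global fairness this recurs), so the paper's stability claim is too strong; your weaker reading---each flag is eventually set to $\bot$, though not necessarily stably---is the one your argument (and, in fact, the paper's) actually supports. The only corner case neither treatment mentions is when one input type is absent, where the claim holds trivially since flags start at $\bot$ and only $\{a,b\}$ interactions can produce $\top$.
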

\begin{proof}
By a previous lemma we know that if $N_a\neq N_b$ then there exists an agent $b_i$ that its value will get permanent and different from $1$. Due to fairness, at some point all $a_i$ agents will consecutively interact with $b_i$ and set their flags to $\bot$. Then it is easy to observe that in any interaction that will follow, no $\top$ value will be assigned to any $flag$.
\qed
\end{proof}
This protocol is quite inefficient as convergence is guaranteed only when a specific sequence of $n$ interactions takes place. We can probably optimize it but that would make it more complex. Finally, notice the absence of memory assumptions. Indeed, the interactions are so straightforward that this protocol is applicable even to the PP model.

\section{Knowing the Population Size} \label{sec:pops}

We prove now that if the agents know the population size $n$ then any symmetric predicate in $SPACE(n\log n)$ is stably computable by the PALOMA model when the communication graph is directed and complete.

\begin{lemma} \label{lem:ids}
If a PALOMA protocol knows the population size $n$ then it can assign to the agents the unique ids $0,1,\ldots,n-1$ and, moreover, all agents will eventually know that the process has terminated.
\end{lemma}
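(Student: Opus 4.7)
The plan is to design a PALOMA protocol that operates in three logical phases within a single program: leader election by counting, sequential ID assignment, and epidemic termination notification. Knowledge of $n$ is exploited in exactly one place---to let the emerging leader certify that its work is done---which is the main obstacle to overcome, since unaided leader election in population-style models offers no local termination detection.

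\emph{Phase 1 (leader election with counting).} Each agent begins as a ``leader'' carrying a binary counter $c$ initialized to $1$. When two leaders interact, the initiator replaces its counter by $c_1+c_2$ (computed on the working tape by $\delta$ after the $\gamma$-transition, using the opponent's count transmitted via the message tape) while the responder becomes a ``follower'' with $c=0$. The sum of leader counters is an invariant equal to $n$, so a leader's counter reaches $n$ if and only if it has absorbed every other leader. By fairness, exactly one agent will eventually observe $c=n$ and thereby unambiguously declare itself the sole leader.

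\emph{Phase 2 (ID assignment).} Upon detecting $c=n$, the unique leader transitions to an ``assigner'' role, writes ID $0$ on its own output tape, and initializes a second binary counter $\mathit{next}=1$. Whenever the assigner meets a follower still lacking an ID, it transmits $\mathit{next}$ through the message mechanism, the follower copies it to its output tape as its ID, and the assigner increments $\mathit{next}$. Because no second leader survives into phase 2, and because interactions between two followers or between the assigner and an already-assigned follower are no-ops, no conflict can arise. When $\mathit{next}$ reaches $n$, the IDs $1,\dots,n-1$ have been distributed to the $n-1$ followers and the assigner moves to a ``done'' state.

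\emph{Phase 3 (termination epidemic).} The ``done'' status is then spread by the trivial rule that any done agent marks every non-done agent it meets. Fairness guarantees that every agent eventually becomes done, and since this flag is never unset we reach an output-stable configuration in which every agent holds a distinct ID in $\{0,\dots,n-1\}$ and knows the process has terminated. The remaining bookkeeping is routine: each counter and each ID fits in $O(\log n)$ bits, and the binary addition of $O(\log n)$-bit numbers together with comparisons against $n$ are all implementable inside $\delta$ within the allotted working-tape space.
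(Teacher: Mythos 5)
Your protocol is correct, but it takes a genuinely different route from the paper's. The paper dispenses with leader election entirely: every agent starts with id $0$ kept in its message tape, and whenever two agents with \emph{equal} ids interact, the initiator increments its id; since an id value, once present in the population, is never completely eliminated, pigeonhole plus fairness force some agent to eventually reach id $n-1$, at which point the ids are exactly $0,1,\ldots,n-1$, and that agent enters a state $q_f$ that is epidemically propagated as the termination signal. So in the paper the ids emerge directly from the collision-resolution rule, and knowledge of $n$ is used only as the certificate ``id $n-1$ has been reached.'' You instead elect a unique leader by coalescing counters (the invariant that the leader counters sum to $n$ makes $c=n$ a certificate of uniqueness), then let that leader hand out ids $1,\ldots,n-1$ sequentially, and finally run the same ``done'' epidemic. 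Your decomposition is more modular and its correctness is nearly immediate from the sum invariant, avoiding the ``an id can never disappear'' argument; the paper's single-phase protocol, by contrast, obtains the ids as a by-product of conflict resolution, avoids funnelling all $n-1$ assignments through a single assigner (a bottleneck, though the lemma demands nothing about speed), and---relevant for the rest of the paper---its equal-id increment rule is precisely the mechanism reused in Theorem \ref{the:iplm}, where $n$ is unknown and the assignment must be interleaved with reinitializations, so your leader-based variant would not transfer there as directly. One small rhetorical nit: you actually use knowledge of $n$ in two places (certifying $c=n$ and detecting $\mathit{next}=n$), not one, but this is immaterial to correctness; the remaining implementation details you defer (rewriting the assigner's message tape after each swap, distinguishing assigned from unassigned followers via states) are at the same level of informality as the paper's own proof.
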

\begin{proof}
Note that $n$ can be stored with $\lceil \log n\rceil$ bits, thus we assume w.l.o.g. that it is already stored in some block of the agents' working memories (e.g. we can assume that $\Sigma=\{0,1\}$ and $X=\{b(n)\}$, where $b(n)$ denotes the binary string representation of $n$, thus all agents get $b(n)$ as their input). All agents initialize an id to 0 in their working memory, copy it to their message memory, remain to state $q_0$, and become ready. When two agents interact, the external transition function alters their states in order to inform the initiator that it was the initiator and the responder that it was the responder (e.g. the rule $(q_0,q_0)\rightarrow (q_1,q_2)$ suffices). After the interaction, the initiator compares its id to the received id (that of the responder), and if they are equal it increases its own id by one. If its updated id is less than $n-1$, it copies it in the message tape, goes to $q_0$ again and becomes ready. If it is equal to $n-1$ it goes to $q_f$ and becomes ready. $q_f$ is then eventually propagated to the whole population because $\gamma(q_f,\cdot)=(q_f,q_f)$ and $\gamma(\cdot,q_f)=(q_f,q_f)$ holds, and all agents are informed that the process has been competed successfully. The responder simply restores its message tape to contain again only its own id (if we assume that its own message is not lost then this step is not necessary, but both ways are equivalent) goes again to $q_0$ and becomes ready.

We claim that the above process is correct. Each agent's id can only be incremented. Moreover, an id that has appeared in the population can never be completely eliminated (it is only incremented when sure that it also exists in another agent). As long as id $n-1$ has not appeared, by the pigeonhole principle, there will be at least two agents with the same id. Thus, eventually (in a finite number of steps), two such agents will interact and one of them will increase its counter by one. Clearly, the above process must end in a finite number of steps with an agent having id $n-1$. When this happens, the agents are assigned the unique ids $0,1,\ldots,n-1$. If not, then at least one id $i < n-1$ is missing from the population. But $i$ should have appeared because then $n-1$ could not have been reached. But this is a contradiction, because once an id appears then it can never be completely eliminated.
\qed
\end{proof}

\begin{definition}
Let \emph{PPALOMA} (`P' standing for ``Population size'') be the extension of PALOMA in which each agent knows the population size (it can read it from a read-only block of its memory).
\end{definition}

\begin{definition}
Let $PPLM$ denote the class of all symmetric predicates that are stably computable by some PPALOMA protocol.
\end{definition}

\begin{theorem} \label{the:psize}
$SSPACE(n\log n)$ is a subset of $PPLM$.
\end{theorem}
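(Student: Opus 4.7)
The plan is to combine Lemma \ref{lem:ids} with a distributed simulation of a deterministic TM, where the tape is spread across the population. Given a symmetric predicate $p \in SSPACE(n\log n)$, fix a deterministic TM $M$ that decides $p$ in space $c n \log n$ for some constant $c$. Since we are in the PPALOMA model, each agent already holds $n$ in a read-only block. We first invoke Lemma \ref{lem:ids} to assign unique ids $0, 1, \ldots, n-1$ and to let each agent detect (via the propagation of $q_f$) that id-assignment has finished.

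Once ids are in place, I would reserve the agent with id $0$ as the \emph{leader}. The leader stores the simulated control state of $M$, the head position $h \in \{0, \ldots, cn\lceil \log n\rceil -1\}$ (which fits in $O(\log n)$ bits), and scratch space for arithmetic. The tape of $M$ is partitioned into $n$ blocks of $\lceil c \log n\rceil$ cells each; agent $i$ permanently holds block $i$. The sensed input string of each agent is placed at a fixed offset inside its block (this is where symmetry of $p$ matters: since $p$ is symmetric, the outcome is independent of which permutation of inputs ends up on which block). Non-leader agents always advertise their id on the message tape so that the leader can recognize them on contact.

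A single step of $M$ is simulated as follows. The leader first computes $t = \lfloor h / \lceil c \log n \rceil \rfloor$, the id of the agent holding the relevant block; this is $LOGSPACE$-computable, hence executable on the $O(\log n)$ working tape. The leader then enters a ``querying $t$'' state and, on every subsequent interaction, inspects the id in the partner's message; interactions with ids $\neq t$ are no-ops at the simulation level, while interactions with id $t$ transfer that agent's block into the leader's message tape. By fairness, such an interaction occurs in finitely many steps. The leader now has enough information to apply $M$'s transition function internally, updating the state, the head position, and the block contents; it writes the updated block back on its message tape and waits again to meet agent $t$ (who on the next contact copies back the new block). After the writeback the leader advances to the next $M$-step. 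When $M$ halts with answer $b \in \{0,1\}$, the leader writes $b$ on its output tape and enters an ``announce-$b$'' state, which, exactly as in Lemma \ref{lem:ids}, is propagated through the external transition rule to every agent, which in turn copies $b$ to its own output tape.

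The main obstacle is purely bookkeeping: checking that every quantity the leader manipulates fits in $O(\log n)$ cells (head index, target id, single block, constant-size control state) and that the arithmetic on $h$ and $t$—integer division, comparison, increment—is carried out in $\delta$-driven sub-phases on the working tape with the working flag set to $1$, so that no interaction is processed mid-computation. Correctness of the simulation follows because the leader's internal logic is deterministic, fairness guarantees that each ``wait for id $t$'' phase terminates, and the distributed tape is never read or written while the leader believes a step is in progress. Hence every fair execution reaches an output stable configuration in which all agents display $p(x)$, proving $p \in PPLM$ and therefore $SSPACE(n\log n) \subseteq PPLM$.
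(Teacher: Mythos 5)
Your proposal is correct, but it follows a genuinely different simulation architecture from the paper's. The paper, after the same id-assignment step (Lemma \ref{lem:ids}), does \emph{not} keep a fixed leader with a global head counter: instead, the agent currently holding the simulated head runs $\mathcal{M}$ locally on its own block, and when the head crosses a block boundary it hands over only $\mathcal{M}$'s control state (plus the target id) to the adjacent agent in the id order, so the distributed tape is traversed either cyclically or, in a second variant, after first shifting all inputs to the low-id agents; blank overflow cells are taken from the other agents' memories behind a separator. Your scheme instead centralizes the control state and an explicit head position $h$ at agent $0$ and performs a fetch/compute/write-back round trip with agent $t=\lfloor h/\lceil c\log n\rceil\rfloor$ for each simulated step. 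Both are sound: your version buys conceptual simplicity (one deterministic leader, no control migration, symmetry invoked identically), at the cost of two rendezvous per TM step and one extra bookkeeping obligation you should make explicit --- in this model \emph{every} established interaction swaps message tapes, so an encounter with an agent of id $\neq t$ is not literally a no-op: the leader must keep a working-tape copy of its outgoing message (target id, block, state) and restore its message tape after each such encounter, and agent $t$ must recognize from the received target id that the write-back is addressed to it; this is the same routine fix the paper uses when responders ``restore'' their message tapes in Lemma \ref{lem:ids}, so it does not affect correctness. The paper's control-migration approach avoids the global head index and the round trips altogether, which is why it generalizes more smoothly to the later nondeterministic simulation (Theorem \ref{the:lowPLM}).
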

\begin{proof}
Let $p:\mathcal{X}\rightarrow \{0,1\}$ be any predicate in $SSPACE(n\log n)$ and $\mathcal{M}$ be the deterministic TM that decides $p$ by using $\mathcal{O}(n\log n)$ space. We construct a PPALOMA protocol $\mathcal{A}$ that stably computes $p$ by exploiting its knowledge of the population size. Let $x$ be any input assignment in $\mathcal{X}$. Each agent receives its input string according to $x$ (e.g. $u$ receives string $x(u)$). Now the agents obtain unique ids according to the protocol presented in the proof of Lemma \ref{lem:ids}. The agent that has obtained the unique id $n-1$ starts simulating $\mathcal{M}$.

If at some point the transition function of $\mathcal{M}$ moves the head to the right, but the agent's working memory has no other symbol to read (e.g. it reads $\sqcup$), then it writes $n-2$ and $\mathcal{M}$'s current state to its message tape and becomes ready. When the unique agent with id $n-2$ interacts with an agent that has $n-2$ in its message tape, it starts simulating $\mathcal{M}$ by putting the head over the first symbol in its working tape and assuming that the state of $\mathcal{M}$ is the state that it found written on the message it received. Generally, whenever an agent with id $i>0$ cannot continue simulating $\mathcal{M}$ to the right it passes control to the agent with id $i-1$. Additionally, if an agent with id $j < n-1$ that simulates $\mathcal{M}$ ever reaches its leftmost cell and the transition function of $\mathcal{M}$ wants to move the head left, then it informs agent with id $j+1$ to continue the simulation (when they, eventually, interact).

Note now, that at some point $\mathcal{M}$ may want to use its (initially) blank cells to write symbols. This is handled by $\mathcal{A}$ in the following manner. $\mathcal{A}$ first starts using the blank cells of the agent with id 0. If more are needed, it goes to agent $n-1$, writes a separator to the first blank cell (naturally we can assume that the separator is already there from $\mathcal{A}$'s initialization step, because also $\mathcal{M}$ uses separators, like `,'s, between the different inputs of the symmetric predicate) and starts using those blank cells. Then it can use those of $n-2,n-1,\ldots$ and so on until the last blank cell of agent with id $1$. In fact, in this approach the distributed memory of $\mathcal{M}$ can be read by beginning from agent $n-1$ and reading all blocks (the parts of the working memories that are used for the simulation) before the separators until agent $1$. Then we read the whole simulation block of agent 0, proceed with the simulation block after the separator of agent $n-1$ and continue with those blocks (after the separator) of all agents until agent with id 1. Thus, the memory is read in a cyclic fashion.

Another approach would be, to initially transfer (shift) the concatenation of all agents inputs (separated by some symbol, e.g. `,') to agents $0,1,2,\ldots$. Now, the first $k$ agents will contain input data, agent with id $k-1$ will probably also have some blank cells and the remaining agents will contain only blank cells. In this case, simulation starts from agent $0$ and $\mathcal{M}$'s tape can be read sequentially from agent $0$ to agent $n-1$.

Whenever, during the simulation, $\mathcal{M}$ accepts, then $\mathcal{A}$ also accepts; that is, the agent that detects $\mathcal{M}$'s acceptance, writes 1 to its output tape and informs all agents to accept. If $\mathcal{M}$ rejects, it also rejects. Finally, note that $\mathcal{A}$ simulates $\mathcal{M}$ not necessarily on input $x=(s_1,s_2,\ldots,s_n)$ but on some $x^{\prime}$ which is a permutation of $x$. The crucial remark that completes the proof is that $\mathcal{M}$ accepts $x$ if and only if it accepts $x^{\prime}$, because $p$ is symmetric.
\qed
\end{proof}

\section{Assigning Unique IDs by Reinitiating Computation} \label{sec:uids}

In this section, we first prove that PALOMA protocols can assume the existence of unique consecutive ids and knowledge of the population size. In particular, in Theorem \ref{the:iplm} we prove that any PALOMA protocol that assumes the existence of unique consecutive ids and knows the population size, can be composed with a PALOMA protocol that correctly assigns these unique ids to the agents and informs them of the correct population size, without assuming any initial knowledge of none of them. We then exploit this result to obtain a first lower bound for $PLM$.

\begin{definition}
Let \emph{IPALOMA} (`I' standing for ``Ids'') be the extension of PALOMA in which the agents have additionally the read-only unique ids $\{0,1,\ldots, n-1\}$ and in which each agent knows the population size (it can read it from a read-only block of its memory).
\end{definition}

\begin{definition}
Let $IPLM$ denote the class of all symmetric predicates that are stably computable by the IPALOMA model.
\end{definition}

\begin{theorem} \label{the:iplm}
$PLM=IPLM$.
\end{theorem}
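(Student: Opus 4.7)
The plan is to establish the two inclusions separately. The inclusion $PLM\subseteq IPLM$ is immediate: any PALOMA protocol can be executed on an IPALOMA instance by simply ignoring the read-only id block and the stored population size, so the same predicate is stably computed. All the real work is in the reverse direction $IPLM\subseteq PLM$, where I need to show that PALOMA can bootstrap the ``extras'' (unique consecutive ids and knowledge of $n$) from scratch and then run an arbitrary IPALOMA protocol on top of them.

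For $IPLM\subseteq PLM$, let $\mathcal{A}'$ be an IPALOMA protocol stably computing some predicate $p$. I would construct a PALOMA protocol $\mathcal{A}$ in which each agent keeps, in $\mathcal{O}(\log n)$ cells, (i) its original input string, (ii) a candidate id $i$, (iii) an epoch $e$ (its current guess at the population size), and (iv) the full state of a local simulation of $\mathcal{A}'$ running under the assumption ``population size $=e$, my id $=i$.'' Initially $i=0$, $e=1$, and the simulation is in its initial configuration. When initiator $u$ meets responder $v$, $\mathcal{A}$ performs three things in order: first each agent sets $e\leftarrow\max(e_u,e_v)$, resetting its embedded $\mathcal{A}'$-simulation to the initial configuration (at the new epoch, with the same input and current id) if its own epoch strictly grew; second, if $i_u=i_v$ then $u$ increments $i_u$, and if this pushes $i_u$ past $e_u-1$ then $u$ additionally sets $e_u\leftarrow i_u+1$ and resets its simulation; third, if none of the first two steps triggered a change and $e_u=e_v$ and $i_u\neq i_v$, the two agents execute a single external transition of $\mathcal{A}'$ using the ids $i_u,i_v$ and parameter $e_u$. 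Each agent always outputs whatever its embedded simulation currently has on its output tape.

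The analysis splits into two claims. The first is that the id/epoch layer stabilizes to the correct state in finite time. I would reuse the invariant from Lemma~\ref{lem:ids}: the multiset of ids always forms a contiguous prefix $\{0,1,\ldots,k\}$ of $\mathbb{Z}_{\geq 0}$, and the agent carrying id $k$ has epoch at least $k+1$. If $k<n-1$ then by the pigeonhole principle two agents share an id, fairness forces them to eventually interact, and $k$ grows. Because the total number of agents is exactly $n$, neither $k$ nor the maximum epoch can ever exceed $n-1$ and $n$ respectively (more than $n$ agents would be required to increment beyond that). Hence $k$ reaches $n-1$ at some finite moment; at this point the ids are exactly $\{0,1,\ldots,n-1\}$ and some agent has epoch $n$, and by fairness the value $n$ propagates (via the max-rule) to every agent. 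From that moment on no rule of $\mathcal{A}$ can alter any id or epoch again, so no further resets occur.

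The second claim is that once all agents share epoch $n$ and hold distinct ids in $\{0,\ldots,n-1\}$, every subsequent step of $\mathcal{A}$ restricted to projection (iv) is a legitimate IPALOMA step of $\mathcal{A}'$ on the original input assignment $x$. Since inputs are written once and never overwritten, the tail execution, seen through projection (iv), is a fair IPALOMA computation; by assumption $\mathcal{A}'$ stably computes $p(x)$, so $\mathcal{A}$ does too. The main obstacle I anticipate is precisely this ``tail is a fair $\mathcal{A}'$-execution'' argument: I need to verify that the interleaving with the bootstrapping layer neither violates fairness for $\mathcal{A}'$ (every pair of agents still meets infinitely often in the post-stabilization tail) nor corrupts $\mathcal{A}'$'s state by applying a transition at a moment when one endpoint is still carrying a stale id or epoch. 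The guard ``only apply $\mathcal{A}'$ when both epochs agree, no id changed this step, and no reset was triggered'' is designed exactly to rule this out, and careful case analysis on the three cases above should close the argument.
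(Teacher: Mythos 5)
Your high-level strategy coincides with the paper's: the inclusion $PLM\subseteq IPLM$ is trivial, and for $IPLM\subseteq PLM$ you run a bootstrapping layer that builds unique consecutive ids and a population-size estimate, resets the embedded IPALOMA protocol whenever its assumed parameters change, and guards ``effective'' executions of $\mathcal{A}'$ so that freshly reset agents never mix their simulation data with stale agents. Where the paper propagates the final population size clockwise around a virtual ring with searching/waiting flags and reinitializes agents one after the other (its Lemmas \ref{lem:ids2} and \ref{lem:ids3}), you use an epoch that spreads by a max-rule with reset-on-strict-increase and a ``no change occurred in this interaction'' guard; this is a legitimate, arguably simpler, variant, and your stabilization argument is sound for the same reasons as the paper's (epoch $n$ is first created at the last id increment, after which ids are frozen, every agent resets exactly when it reaches epoch $n$, and thereafter every executed step of $\mathcal{A}'$ is between two already-reset agents holding the correct ids and the correct $n$, so the tail projects to an execution of $\mathcal{A}'$ from its initial configuration).

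There is, however, one genuine gap: you never address what happens \emph{inside} an agent while it is simulating $\mathcal{A}'$ during the wrong epochs. In those phases the embedded copy of $\mathcal{A}'$ is fed globally inconsistent data (e.g.\ several agents claiming the same id at the same epoch, message histories that cannot arise in any legal IPALOMA execution), and nothing guarantees that $\mathcal{A}'$'s internal computation following an external transition ever halts on such data. In the PALOMA model an agent whose working flag remains $1$ can never again participate in an effective interaction, so a single such divergence permanently freezes an agent; then the step ``two agents sharing an id eventually interact, so $k$ grows'' fails (take ids $0,1,1$ with one of the id-$1$ agents forever busy), the id assignment stalls, and the outputs need not stabilize to $p(x)$. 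This is precisely why the paper's protocol executes the subroutine only for $c=\mathcal{O}(1)$ steps per interaction and proves Lemma \ref{lem:ids4}. Your phrase ``execute a single external transition of $\mathcal{A}'$'' leaves the simulated internal phase unspecified: running it to completion is unsound as above, so you must meter it out a bounded number of steps per interaction and argue that this neither blocks the bootstrapping layer nor corrupts the post-stabilization simulation (note it also affects your guard, since an interaction may now find the embedded machine in the middle of an internal phase). With that ingredient added, your proof matches the paper's.
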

\emph{Intuitive Proof Argument}. To simplify reading the formal proof that follows, we first give an intuitive description. The hard part is $IPLM\subseteq PLM$. The idea is to show that there exists a PALOMA protocol $\mathcal{I}$ that assigns unique consecutive ids and informs all agents of the correct population size, although initially all agents have the same id and know the same, wrong, population size, which are both equal to zero. Then any IPALOMA protocol $\mathcal{A}$, running as a subroutine, can use those data provided by $\mathcal{I}$ in order to be correctly executed.

$\mathcal{I}$ operates as follows. Whenever two agents with the same id interact, one of them increases its id by one and the other keeps its id. Moreover, they now both think that the correct population size is equal to the updated id plus one. By using some arguments similar to those in Lemma \ref{lem:ids}, it is not so hard to show that this process correctly assigns the unique ids $\{0,1,\ldots,n-1\}$ (in a finite number of steps, and due to fairness). In fact, it always first assigns the ids $\{0,1,\ldots,n-3,n-2,n-2\}$ and the last interaction that modifies some id will always be between some agents $u$ and $\upsilon$ that both have the id $n-2$. These now obtain the ids $n-1$ and $n-2$, respectively, and are the only agents that know the correct population size.

Of course, $\mathcal{I}$ cannot detect termination of the id-assignment process, because, otherwise, it is easy to show that it would wrongly detect termination in some component with $t<n$ agents that does not interact with the rest population for a long period of time (it would be possible that the ids $\{0,1,\ldots,t-1\}$ would have been assigned in this component and then those agents would have committed an erroneous termination). To overcome this fact, we do not wait for some termination criterion to be satisfied before executing $\mathcal{A}$. $\mathcal{I}$ allows $\mathcal{A}$ to be executed even while the ids are still totaly wrong. The trick is that, whenever some id is incremented, $\mathcal{I}$ knows that $\mathcal{A}$ was running up to this point without unique ids, and with a wrong population size. By exploiting this knowledge, we make $\mathcal{I}$ at this point reinitialize all agents (restores the input and erases all data produced by $\mathcal{A}$, without altering the ids) and inform them of the new population size with the hope that now the id-assignment process has come to an end. Note that the reinitialization process will be executed $n(n-1)/2$ times, until the ids stabilize, because this is the number of times that some id is incremented. But when the last id-modification takes place, all agents have the correct unique consecutive ids and two agents $u$, $\upsilon$ know the correct population size. Thus, if $\mathcal{I}$ does the above correctly, the population size will be now propagated to all agents and all of them will start executing $\mathcal{A}$ from the beginning. Since $\mathcal{A}$ now reads the correct ids and the correct population size, it will run correctly, like if the id-assignment process had never taken place, and although it was several times running in a wrong manner. We will also have to guarantee that $\mathcal{I}$ does not allow reinitialized agents to have some effective interaction (subroutine $\mathcal{A}$ gets executed) with agents that have not yet been reinitialized, because, otherwise, possibly outdated data of $\mathcal{A}$ would get mixed with those of the restored correct execution. Finally, we always execute $\mathcal{A}$ for a constant number of steps, because in previous wrong rounds it is possible that conflict of inconsistent data takes place (e.g. two agents have consecutive ids but have obtained them in different components), and this could make some machine fall into some infinite loop, and, thus, stop interacting (always busy).
\qed

\begin{proof}
$PLM\subseteq IPLM$ holds trivially (IPALOMA is PALOMA with extra capabilities), so it suffices to show that $IPLM\subseteq PLM$. The rest of the proof is structured in four technical lemmata, which follow, each with its own proof. We do this for sake of clarity and readability. Pick any $p\in IPLM$. Let $\mathcal{A}$ be the IPALOMA protocol that stably computes it. We present a PALOMA protocol $\mathcal{B}$ that stably computes $p$. $\mathcal{B}$ consists of a procedure $\mathcal{I}$ containing protocol $\mathcal{A}$ as a subroutine (see Protocol \ref{prot:ids}). $\mathcal{I}$ is always executed and its job is to assign unique ids to the agents, to inform them of the correct population size and to control $\mathcal{A}$'s execution (e.g. restarts its execution if needed). $\mathcal{A}$, when $\mathcal{I}$ allows its execution and for as many steps as it allows, simply reads the unique ids and the population size provided by $\mathcal{I}$ and executes itself normally. We first describe $\mathcal{I}$'s functionality and then prove that it eventually correctly assigns unique ids and correctly informs the agents of the population size, and that when this process comes to a successful end, it restarts $\mathcal{A}$'s execution in all agents without allowing non-reinitialized agents to communicate with the reinitialized ones. Thus, at some point, $\mathcal{A}$ will begin its execution reading the correct unique ids and the correct population size (provided by $\mathcal{I}$), thus, it will get correctly executed and will stably compute $p$.

We begin by describing $\mathcal{I}$'s variables. $id$ is the variable storing the id of the agent (from which $\mathcal{A}$ reads the agents' ids), $sid$ the variable storing the $id$ that an agent writes in the message tape in order to send it, and $rid$ the variable storing the $id$ that an agent receives via interaction. In order to simplify code we make a convention. We assume that all variables used for sending information, like $sid$, preserve their value in future interactions unless altered by the agent. Initially, $id=sid=0$ for all agents. All agents have an input backup variable $binput$ which they initially set to their input string and make it read-only. Thus, each agent has always available its input via $binput$ even if the computation has proceeded. $working$ represents the block of the working tape that $\mathcal{A}$ uses for its computation and $output$ represents the contents of the output tape. $initiator$ is a binary flag that after every interaction becomes true if the agent was the initiator of the interaction and false otherwise (this is easily implemented by exploiting the external transition function). $ps$ is the variable storing the population size, $sps$ the one used to put it in a message, and $rps$ the received one. Initially, $ps=sps=0$. $search\_for$ contains the id that the agent is searching for to interact with, and $ssearch\_for$, $rsearch\_for$ are defined similarly as before. $search\_for$ and $ssearch\_for$ are initially equal to $-1$ which means \emph{not searching}; the value $-2$ is interpreted as \emph{waiting}.

\algsetup{indent=2em}
\floatname{algorithm}{Protocol}
\renewcommand{\algorithmiccomment}[1]{// #1}
\begin{algorithm}[H]
  \caption{$\mathcal{I}$}\label{prot:ids}
  \begin{algorithmic}[1]
    \medskip
    \IF [two agents with the same ids interact]{$rid==id$}
       \IF [the initiator]{$initiator==1$}
          \STATE $id\leftarrow id+1$, $sid\leftarrow id$ \COMMENT {increases its id by one and writes it in the message tape}
	  \STATE $search\_for\leftarrow 0$, $ssearch\_for\leftarrow 0$ \COMMENT {will start searching for id 0 to reset it}
	  \STATE $ps\leftarrow id+1$, $sps\leftarrow ps$ \COMMENT {sets the population size equal to its updated id plus 1}
       \ELSE [the responder]
	  \STATE $search\_for\leftarrow -2$, $ssearch\_for\leftarrow -2$ \COMMENT {starts waiting}
	  \STATE $ps\leftarrow id+2$, $sps\leftarrow ps$
       \ENDIF
       \STATE \COMMENT {both clear their working block and copy their input string into it}
       \STATE \COMMENT {they also clear their output tape}
       \STATE $working\leftarrow binput$, $output\leftarrow \emptyset$
    \STATE \COMMENT {two agents whose ids differ by one interact}
    \ELSIF {$rid==id-1$ \OR ($id==0$ \AND $rid\geq ps-1$)}
      \STATE \COMMENT {the one with the greatest id}
      \IF [the other was searching for it]{$rsearch\_for==id$}
         \STATE $working\leftarrow binput$, $output\leftarrow \emptyset$
	 \IF {$search\_for == -2$ \AND $rps==ps$}
	    \STATE $search\_for\leftarrow-1$, $ssearch\_for\leftarrow-1$ \COMMENT {stops waiting}
	    \PRINT $\mathcal{A}$ for $c=\mathcal{O}(1)$ steps ignoring the received data
	 \ELSIF {$rps>ps$}
	    \STATE $search\_for\leftarrow id+1$, $ssearch\_for\leftarrow id+1$
	    \STATE $ps\leftarrow rps$, $sps\leftarrow ps$
	 \ENDIF
      \ELSIF {$search\_for == -1$ \AND $rsearch\_for==-1$}
	 \PRINT $\mathcal{A}$ for $c=\mathcal{O}(1)$ steps
      \ENDIF
    \ELSIF {$rid==id+1$  \OR ($rid==0$ \AND $id==ps-1$)}
       \STATE \COMMENT {the one with the smallest id}
       \IF [found what it was searching for]{$search\_for==rid$}
          \STATE $search\_for\leftarrow -1$, $ssearch\_for\leftarrow -1$ \COMMENT {stops searching}
	  \PRINT $\mathcal{A}$ for $c=\mathcal{O}(1)$ steps ignoring the received data
       \ELSIF {$search\_for == -1$ \AND $rsearch\_for==-1$}
	  \PRINT $\mathcal{A}$ for $c=\mathcal{O}(1)$ steps
       \ENDIF
    \ELSE [the interaction is not part of the virtual ring]
       \IF {$search\_for == -1$ \AND $rsearch\_for==-1$ \AND $ps==rps$}
         \PRINT $\mathcal{A}$ for $c=\mathcal{O}(1)$ steps
       \ENDIF
    \ENDIF
  \end{algorithmic}
\end{algorithm}

\begin{lemma} \label{lem:ids1}
(i) $\mathcal{I}$ assigns the ids $\{0,1,\ldots,n-1\}$ in a finite number of steps. (ii) The id-assignment process ends with an interaction $(u,\upsilon)$ of two agents $u$ and $\upsilon$ that both have the id $n-2$. (iii) This is the last interaction that modifies some agent's id. (iv) When this interaction happens, $u$ and $\upsilon$ know $n$ and all other agents know a population size that is strictly smaller than $n$.
\end{lemma}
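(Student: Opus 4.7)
My plan is to reduce everything to a counting argument on the first branch of $\mathcal{I}$. For each $k \geq 0$, let $c_k$ denote the total number of first-branch interactions executed so far whose participants shared the common id $k-1$, with the convention $c_0 := n$ so that $c_k$ also equals the total number of times some agent has been promoted to id $k$. Then the current number of agents with id $k$ is $a_k = c_k - c_{k+1}$.

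From direct inspection of the first branch of $\mathcal{I}$ I would extract three invariants: (a) only the initiator changes its id, and strictly upward, so ids never decrease and any id that has ever appeared is still held by at least one agent thereafter; (b) a $(k,k)$ same-id interaction requires $a_k \geq 2$ beforehand and turns exactly one participant into id $k+1$; (c) within that same interaction both participants set ps $= k+2$, while no other branch of $\mathcal{I}$ ever introduces a ps value not already held by some agent (the \emph{rps $>$ ps} update in the second branch is pure propagation). From (b), inducting on $j$ gives $c_{k+1} \geq j \Rightarrow c_k \geq j+1$, hence $c_k \leq c_0 - k = n-k$. In particular the maximum id ever produced is at most $n-1$, and $c_{n-1} \leq 1$, i.e.\ id $n-1$ is created at most once.

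For (i) I would combine fairness with pigeonhole: whenever the current maximum id $m < n-1$, the $n$ agents lie in at most $m+1 \leq n-1$ id values, so some pair collides; by fairness that pair must eventually establish an effective interaction, strictly increasing $m$. After finitely many such increments $m$ reaches $n-1$, and by invariant (a) every id $0, 1, \ldots, n-1$ has appeared and is still present; with $n$ distinct ids spread over $n$ agents each $a_k$ must equal $1$. For (ii) and (iii), once all ids are distinct the first branch of $\mathcal{I}$ cannot fire again, so no further id modifications occur. Hence the last id modification must be the one that created $n-1$, which is a $(n-2, n-2)$ interaction between two agents $u$ and $\upsilon$. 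For (iv), invariant (c) implies that every ps value ever held in the population equals $k+2$ for some $k$ that actually served as the common id of an executed same-id interaction. By $c_{n-1} \leq 1$, the final interaction is the unique one with $k = n-2$ (giving ps $= n$); every earlier same-id interaction had $k \leq n-3$ and thus produced ps $\leq n-1$. Therefore, at the moment the $(u, \upsilon)$ interaction fires, these two agents freshly raise their ps to $n$, while every other agent still has ps $\leq n-1 < n$.

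The step I expect to require the most care is the counting bound $c_{n-1} \leq 1$: without it, an earlier $(n-2, n-2)$ interaction could already have injected ps $= n$ into the population, and the second branch's propagation could then have spread it across several agents, killing (iv). The inductive inequality $c_{k+1} \geq j \Rightarrow c_k \geq j+1$ is precisely what rules that scenario out; once this is in hand the rest is either a direct inspection of the protocol's code or a standard fairness-plus-pigeonhole argument.
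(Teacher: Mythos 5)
Your proof is correct, and although it reaches the same four conclusions it runs on a different engine than the paper's. The paper's argument rests on two qualitative invariants (ids only increase; an id that has appeared is never completely eliminated) and then proves (ii) and (iv) by contradiction: if the final id-modification were at some id $i<n-2$, id $i+1$ would have been missing although it must have appeared for $n-1$ to arise; and if a third agent knew a population size $\geq n$, there would have to be an agent other than $u$ with id $n-1$. You instead introduce the quantitative bookkeeping $c_k$, $a_k=c_k-c_{k+1}$ and prove $c_k\le n-k$, which simultaneously yields that no id exceeds $n-1$, that the $(n-2,n-2)$ interaction occurs exactly once, and --- combined with your invariant (c) that fresh $ps$ values are created only in same-id interactions, all other updates being pure propagation --- that $ps=n$ cannot be present anywhere before the final interaction. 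On point (iv) this is actually tighter than the paper: the paper's one-line justification leaves implicit why a merely \emph{propagated} $ps\geq n$ traces back to a second agent with id $n-1$, whereas your $c_{n-1}\le 1$ rules out the propagation scenario explicitly. One slip in (i): the same-id interaction forced by fairness need not ``strictly increase $m$,'' since the colliding pair may sit below the current maximum id. The conclusion survives with what you already have: your bound gives $\sum_{k\ge 1} c_k\le \sum_{k=1}^{n-1}(n-k)$, so only finitely many promotions can ever occur, while fairness plus pigeonhole forces further promotions as long as $m<n-1$; hence $m$ reaches $n-1$, and invariant (a) then gives distinctness exactly as you conclude.
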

\begin{proof}
(i) Initially, all agents have the id $0$. Each agent's id can only be incremented. Moreover, an id that has appeared in the population can never be completely eliminated (it is only incremented when sure that it also exists in another agent). As long as id $n-1$ has not appeared, by the pigeonhole principle, there will be at least two agents with the same id. Thus, eventually (in a finite number of steps), due to fairness, two such agents will interact and one of them will increase its id by one. Clearly, the above process must end in a finite number of steps with an agent having id $n-1$. When this happens, the agents are assigned the unique ids $\{0,1,\ldots,n-1\}$. If not, then at least one id $i < n-1$ is missing from the population. But $i$ should have appeared because then $n-1$ could not have been reached. But this is a contradiction, because once an id appears then it can never be completely eliminated.

(ii) Assume not. Then it must end by some interaction between two agents $u$ and $\upsilon$ that both have the same id $i<n-2$. After the interaction, $u$ has the id $i+1$, $\upsilon$ the id $i$, and the agents in general have the uids $\{0,1,\ldots,n-1\}$. This implies that id $i+1$ did not exist in the population just before this interaction. But for $n-1$ to exist it must be the case that $i+1 < n-1$ had appeared at some point. But then it could have never been completely eliminated, which is a contradiction.

(iii) Just after the unique consecutive ids $\{0,1,\ldots,n-1\}$ have been assigned, no agents have the same id. Ids are only modified when two agents with the same id interact. Thus, no agent will again change its id in all subsequent steps.

(iv) $u$ and $\upsilon$ obviously know $n$ after their interaction (that terminates the id-assignment process), because $u$, that sets $id=n-1$, sets $ps$ equal to $id+1$, and $\upsilon$ that keeps its id (that is, it still has $id=n-2$), sets $ps=id+2$. At the same time, for all other agents $w\in V-\{u,\upsilon\}$, it holds that their $ps$ variables contain a value less than $n$, because, if not, then there should be an agent other than $u$ with id $n-1$ which is impossible (due to the correctness of the id-assignment process).
\qed
\end{proof}

\begin{lemma} \label{lem:ids2}
After the last id-modification has taken place (via interaction $(u,\upsilon)$), all agents, one after the other, get informed of the correct population size. The propagation of the population size is done clockwise on the virtual ring $n-1,0,1,\ldots,n-2,n-1$. Whenever, after the unique consecutive ids have been correctly assigned, some agent $i$ propagates the population size to agent $i+1$, $i$ becomes reinitialized and cannot become reinitialized again in the future.
\end{lemma}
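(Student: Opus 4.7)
\emph{Proof plan.} The plan is to trace the propagation of the correct population size along the virtual ring starting from $u$, and then to argue that each reinitialization triggered by this propagation is the last one the corresponding agent ever undergoes. Before beginning the trace, I would pin down the initial state produced by the $(u,\upsilon)$ interaction using lines 3--12 of Protocol \ref{prot:ids}: $u$ has $id=n-1$, $ps=n$, $search\_for=ssearch\_for=0$; $\upsilon$ has $id=n-2$, $ps=n$, $search\_for=ssearch\_for=-2$; every other agent carries a unique id in $\{0,\ldots,n-3\}$ and, by Lemma \ref{lem:ids1}(iv), has $ps<n$. I would also record that by Lemma \ref{lem:ids1}(iii) the first (``$rid==id$'') branch of Protocol \ref{prot:ids} never fires again, so from this moment on no agent's $working$ and $output$ tapes can be cleared via that branch, and no agent's $ssearch\_for$ can be reset to $0$ via that branch.

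The main inductive claim I would prove is: for each $k$ in the sequence $n-1, 0, 1, \ldots, n-3$, fairness guarantees an interaction of the unique agent $a_k$ (with id $k$) and the unique agent $a_{(k+1)\bmod n}$ (with id $(k+1)\bmod n$) in which $a_k$ enters carrying $ssearch\_for=(k+1)\bmod n$ and $ps=n$. A case analysis of Protocol \ref{prot:ids} for this interaction (handling both the standard $rid==id\pm 1$ cases and the wrap-around cases $id==0\wedge rid\geq ps-1$ and $rid==0\wedge id==ps-1$) shows that $a_{(k+1)\bmod n}$ enters the greatest-id branch with $rsearch\_for$ equal to its own id, clears its working and output tapes---which is the reinitialization event for $a_{(k+1)\bmod n}$---and then, via the $rps>ps$ sub-branch, sets $ps\leftarrow n$ and $ssearch\_for\leftarrow (k+2)\bmod n$, while $a_k$ enters the smallest-id branch with $search\_for==rid$, sets $ssearch\_for\leftarrow -1$, and resumes $\mathcal{A}$. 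The induction terminates at $k=n-3$: the receiver is $\upsilon$, which already has $search\_for=-2$ and $ps=n$, so the wait-resolution sub-branch ``$search\_for==-2\wedge rps==ps$'' fires, $\upsilon$ sets $ssearch\_for\leftarrow -1$ and begins running $\mathcal{A}$ normally, completing the propagation around the ring.

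For the uniqueness clause, I would observe that clearing of $working$ and $output$ occurs in exactly two places in Protocol \ref{prot:ids}: the now-dead id-modification branch and the $rsearch\_for==id$ sub-branch of the greatest-id branch. Hence for an already-visited agent $a_{(k+1)\bmod n}$ to be reinitialized again, some later interaction must fire its greatest-id branch with $rsearch\_for$ equal to its id. That branch on $a_{(k+1)\bmod n}$ can only be triggered by the unique agent with id $k$ (or with id $n-1$ in the wrap-around case), so this reduces to showing that $a_k$'s $ssearch\_for$ never reattains the value $(k+1)\bmod n$ after being set to $-1$ during its propagation step. Inspecting the protocol, the only rule producing $ssearch\_for=(k+1)\bmod n$ is the $rps>ps$ sub-branch fired \emph{on} $a_k$ itself, which would in turn require some agent to trigger the greatest-id branch on $a_k$ with $rsearch\_for=k$; applying the same argument backwards along the ring, the chain terminates at $u$, whose $ssearch\_for$ is set to $0$ only by the dead id-modification branch.

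The main obstacle is exactly this ``backwards-along-the-ring'' step, because the chaotic pre-correction period may have left several stale waves in flight, and one has to rule out that any such wave can resuscitate $ssearch\_for$ on an already-visited agent. My plan for this is to establish a monotone invariant of the form: \emph{once $a_k$ has adopted $ps=n$ and subsequently set $ssearch\_for=-1$, no message received thereafter can raise its $ssearch\_for$ above $-1$}. This invariant is verified by a routine inspection of all code paths of Protocol \ref{prot:ids}, using the facts that no rule ever injects $ps>n$ and that, with the id-modification branch dead, no surviving rule sets $ssearch\_for$ to a non-negative value that was not already carried by the sender.
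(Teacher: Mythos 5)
Your plan follows essentially the same route as the paper's own proof: it pins down the post-assignment states of $u$ and $\upsilon$, traces the clockwise propagation wave around the virtual ring by a branch-by-branch analysis of Protocol \ref{prot:ids} (including the wrap-around conditions, with $\upsilon$'s waiting state resolving the wave at id $n-2$), and then argues that the reset lines can never fire again because a propagating agent's $ssearch\_for$ drops to $-1$ and no rule can raise it back (no $ps$ value ever exceeds $n$). Your explicit treatment of stale search values left over from the pre-stabilization period is a sound elaboration of the final step that the paper dismisses as ``easy to see,'' so the proposal is correct and matches the paper's argument.
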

\begin{proof}
During the last id-modification, $u$ increases its id to $n-1$ while, on the other hand, the responder $\upsilon$ keeps its id to $n-2$. Now, all agents have been assigned the correct unique ids $\{0,1,\ldots,n-1\}$ and, according to Lemma \ref{lem:ids1}, no agent can again alter the value of its $id$ variable. We can, thus, from now on call the agents by their ids. Agent $n-1$ (which is  agent $u$, with id $n-1$) sets both variables $search\_for$ and $ssearch\_for$ to $0$ meaning that it may now only interact with agent $0$; all other interactions that may happen until it meets agent $0$, simply have no effect. Agent $n-2$ (that is, $\upsilon$) sets those variables to $-2$, that is, it is \emph{waiting} for a specific interaction (which will be explained later on). Finally, both set $ps$ and $sps$ to the correct population size $n$, restore $\mathcal{A}$'s working blocks to the respective input strings (by executing $working\leftarrow binput$), and clear their output tapes (denoted by $output\leftarrow \emptyset$ in the code).

Moreover, from now on, an agent with id $k$ other than $n-1$ and $n-2$ can only set $ps$ to $n$ if it interacts with agent $k-1$ (if $k=0$, $k-1$ is defined to be $n-1$) and if $k-1$ already knows that the population size is $n$. As we have already shown in Lemma \ref{lem:ids1}, only agents $n-2$ and $n-1$ know that the population size is $n$. All interactions between agent $n-2$ and some other agent $w$ have no effect, except for the case in which $w$'s id is $n-3$ and it has already obtained $n$. Thus, $n-2$ cannot inform any agent of the population size and $n$ cannot be propagated \emph{counterclockwise}. Moreover, agent $n-1$ can only propagate $n$ to agent $0$, $0$ to $1$, and so on, which implies that the agents can only learn $n$ via \emph{clockwise} propagation. We next show that this is what eventually happens.

Eventually, in a fair execution, agent $n-1$ will interact with agent $0$. When this happens, agent $n-1$ has just found the agent that it was searching for, thus it sets $search\_for$ and $ssearch\_for$ to $-1$ meaning that from now on it is not searching for any agent and executes $\mathcal{A}$ for $c=\mathcal{O}(1)$ steps while ignoring any received data. The reason for executing $\mathcal{A}$ for a fixed number of steps will be explained in the sequel. What is of interest here, is that $\mathcal{I}$ now, in fact, has correctly reinitialized $\mathcal{A}$'s execution on agent $n-1$, as if the whole id-assignment process had never been executed. The reason is that $\mathcal{A}$ can now read the correct $id$ and the correct population size, the working block simply contains the agent's input string, the output tape is empty, and all received data during the interaction have been ignored. Now, agent $0$ realizes that the other agent was searching for it (i.e. $rsearch\_for==id$ is \emph{true}) so it \emph{resets itself} (executes $working\leftarrow binput$ and $output\leftarrow \emptyset$) and, since $rps>ps$ is \emph{true} (because, as we have already shown, all agents except for $n-2$ and $n-1$ still know a population size that is strictly smaller than $n$), it sets $search\_for$ and $ssearch\_for$ to $1$ (now searching for agent $1$ to forward its knowledge). Finally, it sets $ps$ and $sps$ to the correct population size, received from agent's $n-1$ message. It is easy to see that this process is continued until agent $n-2$ interacts with agent $n-3$ and the latter knows the correct population size. At this point, both set $search\_for$ and $ssearch\_for$ to $-1$ and both execute $\mathcal{A}$ for $c=\mathcal{O}(1)$ steps while ignoring any received data, thus now all agents have been correctly reinitialized. Finally, it is easy to see that no agent may be reinitialized again in future steps. The reason is that ids won't change and, moreover, no agent will alter its $search\_for$ variable (all agents will forever keep it to $-1$), thus, the code lines that reset $\mathcal{A}$'s data cannot be executed again in the future.
\qed
\end{proof}

In the following Lemma, we call an interaction \emph{effective} if protocol $\mathcal{A}$ is executed in at least one of the participating agents.
\begin{lemma} \label{lem:ids3}
After the unique consecutive ids $\{0,1,\ldots,n-1\}$ have been assigned, agents that have reinitialized themselves can only have \emph{effective interactions} with each other, that is, when they interact with each other $\mathcal{A}$ is executed, while, on the other hand, their interactions with non-reinitialized agents have no effect w.r.t. to $\mathcal{A}$'s execution.
\end{lemma}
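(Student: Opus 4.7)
The plan is to distill Lemma \ref{lem:ids2} into a structural invariant about the variables $ps$ and $search\_for$, and then reduce the claim to a line-by-line inspection of Protocol \ref{prot:ids}. I would call an agent \emph{reinitialized} precisely when both $ps = n$ and $search\_for = -1$ hold; this matches the usage in Lemma \ref{lem:ids2} and excludes the waiting agent $n-2$ and the current frontier agent (each satisfies $ps = n$ but not $search\_for = -1$). Then, by induction on the number of interactions following the last id-modification, I would establish the invariant that the agents with $ps = n$ form a contiguous clockwise arc on the virtual ring $n-1, 0, 1, \ldots, n-2$ that starts at $n-1$ and ends at the unique frontier, together with the waiting $n-2$. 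An immediate corollary is that every ring-neighbor of a reinitialized agent already has $ps = n$, so every non-reinitialized agent (all of which have $ps < n$) is at ring-distance at least two from every reinitialized one.

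Fix now an interaction between a reinitialized $u$ and a non-reinitialized $v$. By the corollary, $u$ and $v$ are not ring-adjacent, so neither ring branch of Protocol \ref{prot:ids} (the greater-id or the smaller-id branch) applies, and control reaches the ``not part of the virtual ring'' \textbf{else} branch. Its guard $ps == rps$ compares $u$'s value $n$ against $v$'s strictly smaller value and fails, so $\mathcal{A}$ is not executed on either side. Conversely, if both $u$ and $v$ are reinitialized, they share $ps = n$ and $search\_for = -1$; whether they are ring-adjacent (in which case the predecessor-found guards $rsearch\_for == id$ and $search\_for == rid$ are both false, so the fall-through guard $search\_for == -1 \wedge rsearch\_for == -1$ fires in the appropriate ring branch) or not (in which case the non-ring branch triggers with $ps = rps = n$ and both $search\_for$ fields equal to $-1$), some branch executes $\mathcal{A}$. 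Together these two observations are exactly the content of the lemma.

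The main obstacle is the inductive step of the structural invariant: one must verify that no line of Protocol \ref{prot:ids} ever sets $ps$ to $n$ in an agent that is not the clockwise successor of the current frontier. Inspecting the protocol, $ps$ is only lifted by the assignment $ps \leftarrow rps$ inside the reset events triggered by a ring interaction with the correct predecessor whose message carries $rps = n$, and that predecessor must therefore already have $ps = n$. This preserves the contiguity of the reinitialized arc through every subsequent step, after which the case analysis above is essentially mechanical and the lemma follows.
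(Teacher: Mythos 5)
There is a genuine gap, and it sits exactly where the real work of this lemma lies. Your contiguity invariant is fine, but the corollary you draw from it is false: non-reinitialized agents do \emph{not} all have $ps<n$. As you yourself observe when fixing the definition of ``reinitialized'', the waiting agent $n-2$ and the current frontier agent are non-reinitialized yet have $ps=n$, and each of them \emph{is} ring-adjacent to a reinitialized agent (the frontier to its predecessor, which became reinitialized in the very interaction that created the frontier, and $n-2$ to $n-1$). Hence the claim that every reinitialized/non-reinitialized pair is at ring distance at least two is wrong, and your case analysis, which funnels every such pair into the non-ring \textbf{else} branch and kills it because ``$ps==rps$ fails'', silently skips exactly the interactions that need an argument. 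Worse, for the skipped pairs your stated reason would not work even where the non-ring branch does apply: when a reinitialized agent meets $n-2$ or the frontier, $ps==rps$ \emph{holds} (both equal $n$); what blocks the execution of $\mathcal{A}$ there is the $search\_for$ guard, since $n-2$ has $search\_for=-2$ and the frontier has $search\_for$ equal to its successor's id, neither of which is $-1$. You must also check that such a pair cannot trigger the reset guards ($rsearch\_for==id$ or $search\_for==rid$), which would also execute $\mathcal{A}$; this is where the clockwise ordering of reinitializations (if $i+1$ is reinitialized then so is $i$) is really needed. These are precisely the paper's cases 1 and 2 (ring-adjacent pairs involving $n-2$ or the frontier) and the heart of its case 3, where it argues that a non-reinitialized agent that knows $n$ must be $n-2$ or the frontier and therefore has $search\_for\neq -1$.

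The repair is not deep: split the non-reinitialized side into agents with $ps<n$ (for these your invariant does give non-adjacency, and the population-size guards $ps==rps$ and $rps>ps$ indeed fail, so your argument goes through) and the two exceptional agents with $ps=n$, for which you must inspect the $search\_for$ guards directly in both the ring and non-ring branches, as the paper does. Your second half --- that two reinitialized agents always have an effective interaction because both have $search\_for=-1$ and $ps=n$ --- is correct and matches the paper.
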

\begin{proof}
After the unique ids have been successfully assigned, it is like the population is partitioned in two non-communicating classes, the class $R$ of \emph{reinitialized} agents and $N$ of the \emph{non-reinitialized} ones. Initially (just after the unique ids have been successfully assigned), $R=\emptyset$ and $N=\{0,1,\ldots,n-1\}$, that is, all agents are considered as non-reinitialized. An agent $i\neq n-2$ moves from $N$ to $R$ iff it interacts with the next agent $i+1$ on the virtual ring $n-1,0,1,\ldots,n-2,n-1$ and $i$ already knows the correct population size $n$; the only agents that initially know the correct population size are $n-2$ and $n-1$. $n-3$ and $n-2$ will be the last agents to move from $N$ to $R$ as we have already shown in the proof of Lemma \ref{lem:ids2}, and from that point on it will hold that $R=\{0,1,\ldots,n-1\}$ and $N=\emptyset$. Now, for any intermediate step, let $i\in R$ and $j\in N$, where $i\neq j$ and $i,j\in \{0,1,\ldots,n-1\}$ be the interacting agents. There are three cases:
\begin{enumerate}
 \item $i=j+1$ (for $j=n-1$, $i$ is 0): Due to the clockwise propagation of the reinitializations this can only happen if $i=n-1$ and $j=n-2$. But $n-2\in N$ implies that $n-2$ is still waiting (to interact with $n-3$ which is still in $N$), thus, this interaction has no effect.
\item $i=j-1$ (for $j=0$, $i$ is $n-1$): $i\in R$ implies that $j$ already knows the correct population size. But since $j\in N$, $j$ is still searching for $j+1$ to propagate its knowledge, implying that this interaction has also no effect.
\item $i\neq j+1$ and $i\neq j-1$: Assume that an effective interaction takes place. This implies (line $37$ in Protocol \ref{prot:ids}) that both know the same population size and both have $search\_for=-1$. But $i\in R$ which implies that they both know that the population size is $n$. Since $j$ is non-reinitialized but knows the correct population size, either it is agent $n-2$, in which case it has $search\_for=-2$, or it has $search\_for=j+1\neq -1$ (because, due to the clockwise propagation of the population size, it has learned the correct population size from $j-1$ but has not yet found $j+1$ to propagate it). In both cases we have a contradiction because $j$ should have $search\_for=-1$ according to our assumption. Thus, again here, no effective interaction can take place.
\end{enumerate}
Obviously, $\mathcal{A}$ is always executed when two agents in $R$ interact, because they both have $search\_for=-1$ and know the same, correct, population size (to get convinced, inspect $\mathcal{I}$'s code).
\qed
\end{proof}

\begin{lemma} \label{lem:ids4}
No agent ever falls into some infinite loop.
\end{lemma}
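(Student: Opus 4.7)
The plan is to exploit the explicit step-bound that $\mathcal{I}$ imposes whenever it delegates control to $\mathcal{A}$. Inspecting Protocol \ref{prot:ids} line by line, every branch of $\mathcal{I}$ that ever turns on computation of the subroutine does so via the \textbf{execute} $\mathcal{A}$ for $c=\mathcal{O}(1)$ steps directive; nowhere else in $\mathcal{I}$ is $\mathcal{A}$ allowed to run. Thus, from an agent's perspective, any burst of internal computation that follows an interaction decomposes into (a) a bounded amount of bookkeeping done by $\mathcal{I}$ itself, and (b) at most $c$ simulated steps of $\mathcal{A}$. Since $c$ is a fixed constant independent of the population size, the total number of internal steps between two consecutive interactions of any agent is bounded by a fixed constant.

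The argument would proceed in three short steps. First, I would observe that $\mathcal{I}$'s own code performs only a constant number of comparisons, assignments, and copies of $\mathcal{O}(\log n)$-size blocks per interaction, each of which is a finite deterministic procedure that terminates in a bounded number of head moves. Second, I would note that the delegation to $\mathcal{A}$ is \emph{not} a call that waits for $\mathcal{A}$ to terminate on its own; rather, $\mathcal{I}$ runs $\mathcal{A}$ for a hardwired constant $c$ and then unconditionally resets the working flag to $0$. This is exactly the safeguard designed to prevent any semantic mismatch of ids, population sizes, or working/output data from trapping $\mathcal{A}$ in a nonterminating computation (which could well happen during the unreliable intermediate rounds described in Lemmas \ref{lem:ids1}--\ref{lem:ids3}, when $\mathcal{A}$ may temporarily see duplicated ids or an inconsistent $ps$). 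Third, I would conclude that after at most the fixed constant number of internal steps contributed by $\mathcal{I}$ plus $c$, the agent's working flag becomes $0$ again, so the agent is ready for the next interaction.

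The main obstacle I expect is a purely stylistic one: convincing the reader that the \textbf{execute} directive should be read operationally as ``simulate at most $c$ steps of $\mathcal{A}$'s internal transition function, then return control,'' rather than as a blocking subroutine call. This is the whole purpose of the $c=\mathcal{O}(1)$ bound that was emphasized in the intuitive argument following Theorem \ref{the:iplm}, where it was explicitly motivated by the risk that ``conflict of inconsistent data [\dots] could make some machine fall into some infinite loop.'' Once this reading is fixed, the lemma follows immediately since no agent is ever engaged in more than $\mathcal{O}(1)$ consecutive internal steps, and therefore no agent can remain with its working flag set to $1$ forever. In particular, fairness of the execution continues to translate into fairness of actually effective interactions, which is exactly what the subsequent correctness argument for $\mathcal{B}$ needs.
\qed
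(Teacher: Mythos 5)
Your proposal is correct and follows essentially the same argument as the paper: $\mathcal{I}$ itself performs only finitely many assignments per interaction, and the hardwired bound of $c=\mathcal{O}(1)$ steps on each invocation of $\mathcal{A}$ is exactly the safeguard against inconsistent intermediate data trapping an agent in a nonterminating computation, ensuring no agent stays busy forever. The extra discussion of the operational reading of the \textbf{execute} directive is a stylistic elaboration, not a different proof route.
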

\begin{proof}
It is easy to see that $\mathcal{I}$, except for calling $\mathcal{A}$, only performs some variable assignments, which cannot lead into some infinite loop. But while the correct ids have not yet been assigned to the agents, some interacting agents may contain inconsistent data, which could make subroutine $\mathcal{A}$ fall into some infinite loop. By always executing $\mathcal{A}$ for a constant number of steps we guarantee that it won't. In fact, we guarantee that no agent can become busy for an infinite number of steps, because this would disable $\mathcal{I}$'s capability to reinitialize it, if needed.
\qed
\end{proof}

We can now conclude by combining the above lemmata. We have shown that $\mathcal{B}$, by using $\mathcal{I}$, correctly assigns the unique consecutive ids $\{0,1,\ldots,n-1\}$ to the agents (Lemma \ref{lem:ids1}) and informs them of the correct population size (Lemma \ref{lem:ids2}). Then each agent that propagates the population size, to the next one in the clockwise direction, becomes reinitialized, in the sense that it starts executing $\mathcal{A}$ from the beginning and cannot get reinitialized again in future steps (Lemma \ref{lem:ids2}). During this propagation process, $\mathcal{B}$ does not allow non-reinitialized agents, that possibly contain outdated information, to have some effective interaction with reinitialized agents (Lemma \ref{lem:ids3}). Finally, due to the intermittent execution of $\mathcal{A}$, no agent could have ever become busy for an infinite number of steps, thus, it is guaranteed that the reinitializations can always be applied (Lemma \ref{lem:ids4}). Since $\mathcal{A}$ stably computes $p$, the same holds for $\mathcal{B}$, that correctly simulates $\mathcal{A}$, and the theorem follows.
\qed
\end{proof}

We now show that any symmetric predicate in $SPACE(n\log n)$ also belongs to $IPLM$.

\begin{theorem} \label{the:lowiplm}
$SSPACE(n\log n)$ is a subset of $IPLM$.
\end{theorem}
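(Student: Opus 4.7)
The plan is to observe that this result follows essentially for free from Theorem \ref{the:psize}, and to sketch the small amount of additional work required. Since the IPALOMA model grants the agents strictly more than the PPALOMA model (unique consecutive ids in $\{0,1,\ldots,n-1\}$ in addition to knowledge of $n$), any PPALOMA protocol can be executed verbatim by IPALOMA agents that simply ignore their ids. Hence $PPLM \subseteq IPLM$, and combining this with $SSPACE(n\log n)\subseteq PPLM$ (Theorem \ref{the:psize}) immediately yields $SSPACE(n\log n)\subseteq IPLM$.

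For self-containedness, and because the construction is cleaner in the IPALOMA setting, I would also spell out a direct argument. Given any $p\in SSPACE(n\log n)$ and a deterministic TM $\mathcal{M}$ deciding its language in $\mathcal{O}(n\log n)$ space, I would design an IPALOMA protocol $\mathcal{A}$ whose agents use the concatenation of their $\mathcal{O}(\log n)$-sized working blocks, indexed by the already-given unique ids $0,1,\ldots,n-1$, as the single tape of $\mathcal{M}$. Exactly one agent at a time simulates $\mathcal{M}$, i.e. holds the virtual head; the agent with id $n-1$ starts the simulation and, whenever the head would move past its right-end (respectively left-end) boundary, the agent records $\mathcal{M}$'s current state and head direction in its message tape together with the id of the neighbour that owns the next tape segment (in the wraparound order $n-1,n-2,\ldots,0,n-1$, or the order chosen in Theorem \ref{the:psize}). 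Control is then transferred at the next interaction between the two agents with those consecutive ids, which by fairness occurs in finitely many steps. When $\mathcal{M}$ accepts or rejects, the current simulator writes the output bit and uses the standard one-bit propagation (as in the multiplication example) to inform every other agent; symmetry of $p$ handles the fact that $\mathcal{M}$ is effectively run on some permutation of the true input assignment.

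The only steps requiring care, and the ones I would write out formally, are: (a) the bookkeeping for transferring the virtual head across agent boundaries, including which side of the boundary the next symbol lives on and how to preserve $\mathcal{M}$'s state across the idle interval between an ``I want to move'' decision and the actual interaction that realises it; and (b) the handling of initially blank space, which I would resolve exactly as in Theorem \ref{the:psize} by using the blank portions of each agent's working tape in a fixed cyclic order, with a one-time separator symbol marking the start of the simulated tape. There is no genuine obstacle here: the id assignment, which was the hard part of Theorems \ref{the:psize} and \ref{the:iplm}, is handed to us for free by the IPALOMA model, so the simulation reduces to a standard multi-agent implementation of a single Turing tape over $\mathcal{O}(n\log n)$ cells, and its correctness follows from fairness together with the symmetry of $p$.
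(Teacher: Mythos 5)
Your proposal is correct. Your headline derivation --- $PPLM\subseteq IPLM$ because an IPALOMA agent can simply ignore its ids and run any PPALOMA protocol verbatim, combined with Theorem \ref{the:psize} --- is a legitimate shortcut that the paper does not take: the paper instead redoes the whole simulation from scratch in the IPALOMA setting (agent with id $0$ starts the simulation, control is passed to increasing ids via messages carrying $\mathcal{M}$'s state, blank space is consumed in a cyclic order starting from agent $n-1$, the output bit is propagated upon halting, and symmetry of $p$ absorbs the fact that $\mathcal{M}$ runs on a permutation of the input). Your fallback direct construction is essentially that same argument (you start the head at agent $n-1$ and follow the ordering of Theorem \ref{the:psize}; the direction is immaterial), and it flags the same points of care --- head transfer across agent boundaries and the treatment of initially blank cells --- at roughly the paper's level of detail. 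What your shortcut buys is brevity for this theorem in isolation; what the paper's (and your secondary) explicit construction buys is reuse: Theorem \ref{the:lowPLM} modifies exactly this deterministic simulation to obtain the nondeterministic one, so having the construction spelled out in the IPALOMA setting is not redundant even if the inclusion itself follows in one line from Theorem \ref{the:psize}.
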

\emph{Proof Sketch}. The IPALOMA model simulates a deterministic TM $\mathcal{M}$ of $\mathcal{O}(n\log n)$ space. Initially the agents shift all the non-blank contents (input strings) to the left in the agent chain $\{0,1,\ldots, n-1\}$, until their collective memory looks like that of $\mathcal{M}$ (input symbols to the left and blank cells to the right in the chain). Then, agent 0 starts simulating $\mathcal{M}$ and when it has no other memory to read to the right it passes $\mathcal{M}$'s control to agent 1, and so on. Whenever $\mathcal{M}$ accepts (or rejects), output $1$ ($0$ resp.) is propagated to all agents.
\qed
\begin{proof}
Let $p:\mathcal{X}\rightarrow \{0,1\}$ be any predicate in $SSPACE(n\log n)$ and $\mathcal{M}$ be the deterministic TM that decides $p$ by using $\mathcal{O}(n\log n)$ space. We construct an IPALOMA protocol $\mathcal{A}$ that stably computes $p$. Let $x$ be any input assignment in $\mathcal{X}$. Each agent receives its input string according to $x$ (e.g. $u$ receives string $x(u)$). The agents have the unique ids $\{0,1,\ldots,n-1\}$ and know the population size $n$. The agent that has the unique id $0$ starts simulating $\mathcal{M}$.

If at some point the transition function of $\mathcal{M}$ moves the head to the right, but the agent's working memory has no other symbol to read (e.g. it reads $\sqcup$), then it writes $1$ and $\mathcal{M}$'s current state to its message tape and becomes ready. When the unique agent with id $1$ interacts with an agent that has $1$ written in its message tape, it starts simulating $\mathcal{M}$ by putting the head over the first symbol in its working tape and assuming that the state of $\mathcal{M}$ is the state that it found written on the message it received. Generally, whenever an agent with id $0\leq i<n-1$ cannot continue simulating $\mathcal{M}$ to the right, it passes control to the agent with id $i+1$. Additionally, if an agent with id $0<j\leq n-1$ that simulates $\mathcal{M}$ ever reaches its leftmost cell and the transition function of $\mathcal{M}$ wants to move the head left, then it informs agent with id $j-1$ to continue the simulation from its last non-blank cell (when they, eventually, interact).

Note now, that at some point $\mathcal{M}$ may want to use its (initially) blank cells to write symbols. This is handled by $\mathcal{A}$ in the following manner. $\mathcal{A}$ first starts using the blank cells of the agent with id $n-1$. If more are needed, it goes to agent $0$, writes a separator to the first blank cell (naturally we can assume that the separator is already there from $\mathcal{A}$'s initialization step, because also $\mathcal{M}$ uses separators, like `,'s, between the different inputs of the symmetric predicate) and starts using those blank cells. Then it can use those of $1,2,\ldots$ and so on until the last blank cell of agent with id $n-2$. In fact, in this approach the distributed memory of $\mathcal{M}$ can be read by beginning from agent $0$ and reading all blocks (the parts of the working memories that are used for the simulation) before the separators until agent $n-2$. Then we read the whole simulation block of agent $n-1$, proceed with the simulation block after the separator of agent $0$ and continue with those blocks (after the separator) of all agents until agent with id $n-2$. Thus, the memory is read in a cyclic fashion.

Another approach would be, to initially transfer (shift) the concatenation of all agents inputs (separated by some symbol, e.g. `,') to agents $0,1,2,\ldots$. Now, the first $k$ agents will contain input data, agent with id $k-1$ will probably also have some blank cells and the remaining agents will contain only blank cells. In this case, the simulation starts again from agent $0$, but now $\mathcal{M}$'s tape can be read sequentially from agent $0$ to agent $n-1$.

Whenever, during the simulation, $\mathcal{M}$ accepts, then $\mathcal{A}$ also accepts; that is, the agent that detects $\mathcal{M}$'s acceptance, writes 1 to its output tape and informs all agents to accept. If $\mathcal{M}$ rejects, it also rejects. Finally, note that $\mathcal{A}$ simulates $\mathcal{M}$ not necessarily on input $x=(s_0,s_1,\ldots,s_{n-1})$ but on some $x^{\prime}$ which is a permutation of $x$. The reason is that agent with id $i$ does not necessarily obtain $s_{i}$ as its input. The crucial remark that completes the proof is that $\mathcal{M}$ accepts $x$ if and only if it accepts $x^{\prime}$, because $p$ is symmetric.
\qed
\end{proof}

\begin{theorem} \label{the:spnlogn}
$SSPACE(n\log n)$ is a subset of $PLM$.
\end{theorem}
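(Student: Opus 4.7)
The plan is to obtain this theorem as an immediate consequence of the two preceding results, essentially by transitivity of inclusion. Specifically, Theorem \ref{the:lowiplm} has just established that $SSPACE(n\log n)\subseteq IPLM$, and Theorem \ref{the:iplm} has established the (nontrivial) equality $PLM=IPLM$. Combining the two yields $SSPACE(n\log n)\subseteq IPLM = PLM$, which is precisely the claim.

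In slightly more detail, I would pick an arbitrary symmetric predicate $p\in SSPACE(n\log n)$ and invoke Theorem \ref{the:lowiplm} to obtain an IPALOMA protocol $\mathcal{A}$ that stably computes $p$ (this is the protocol that arranges the agents in a logical chain $0,1,\ldots,n-1$ and simulates the deterministic TM $\mathcal{M}$ of $\mathcal{O}(n\log n)$ space on their collective memory, handing control cyclically between neighbours). Then I would apply the construction of Theorem \ref{the:iplm}, which wraps any IPALOMA protocol $\mathcal{A}$ inside a PALOMA procedure $\mathcal{I}$ that iteratively assigns unique consecutive ids in $\{0,1,\ldots,n-1\}$, propagates the correct population size clockwise along the resulting virtual ring, and reinitializes the execution of $\mathcal{A}$ each time the id assignment changes. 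The resulting composed PALOMA protocol $\mathcal{B}$ stably computes the same predicate $p$, witnessing $p\in PLM$.

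There is essentially no obstacle here since all the work has been done in Theorems \ref{the:iplm} and \ref{the:lowiplm}; the only thing to notice is that the composition is well defined, that is, that the $\mathcal{A}$ produced by Theorem \ref{the:lowiplm} satisfies the hypotheses used in Theorem \ref{the:iplm} (it is a genuine IPALOMA protocol that reads its ids and its population size only from the read-only blocks written by $\mathcal{I}$, and it does not diverge in finitely many steps, which is exactly what Lemma \ref{lem:ids4} requires via the intermittent $c=\mathcal{O}(1)$-step execution). Since both conditions are evidently met by the chain-simulation protocol of Theorem \ref{the:lowiplm}, the inclusion follows, and the proof reduces to a one-line citation chain.
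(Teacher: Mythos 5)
Your proposal is correct and matches the paper's own argument, which likewise obtains the result directly from the chain $SSPACE(n\log n)\subseteq IPLM=PLM$ given by Theorems \ref{the:lowiplm} and \ref{the:iplm}. The extra remarks on why the composition is well defined are fine but not needed beyond what those theorems already provide.
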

\begin{proof}
Follows from $SSPACE(n\log n)\subseteq$ $IPLM$ $=PLM$. (Moreover, by Savitch's theorem \cite{sav}, we have that $SNSPACE(\sqrt{n\log n})$ is a subset of $PLM$.)
\qed
\end{proof}

\removed{

According to Theorem \ref{the:lowiplm}, $SSPACE(n\log n)$ is a subset of $IPLM$ which according to Theorem \ref{the:iplm} is equal to $PLM$. To summarize, $SSPACE(n\log n)\subseteq$ $IPLM=PLM$ (and by Savitch's theorem \cite{sav}, $SNSPACE(\sqrt{n\log n})$ is a subset of $PLM$).

}

\section{An Improved Lower Bound} \label{sec:imlowPLM}

\subsection{The PALOMA Model Simulates Community Protocols} \label{subsec:comm}

Here, we show that the PALOMA model simulates the Community Protocol model. This establishes that $SNSPACE(n\log n)$ is a lower bound for $PLM$, thus, improving that of Theorem \ref{the:spnlogn}. 

\begin{definition}
Let $CP$ denote the class of all symmetric predicates that are stably computable by the community protocol model.
\end{definition}

It was shown in \cite{GR09} that $CP$ is equal to $SNSPACE(n\log n)$.

\begin{definition}
Let $RCP$ denote the class of all symmetric predicates that are stably computable by a restricted version of the community protocol model in which the agents can only have the unique ids $\{0,1,\ldots,n-1\}$.
\end{definition}

We first show that the community protocol model that is restricted in the above fashion is equivalent to the community protocol model.
\begin{lemma}
$RCP=CP$.
\end{lemma}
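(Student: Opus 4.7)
\emph{Proof plan.} The inclusion $CP\subseteq RCP$ is immediate and I would dispatch it in one line: by definition a community protocol must compute the correct output under every assignment of pairwise distinct ids drawn from the infinite id-set, so it in particular works correctly when the supplied ids happen to be $\{0,1,\ldots,n-1\}$; hence any $p\in CP$ is also in $RCP$.

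For the reverse inclusion $RCP\subseteq CP$ I would avoid a direct operational simulation (which is awkward because a community protocol sees arbitrary ids, has only $\mathcal{O}(1)$ id-slots, and is restricted to equality comparisons on ids) and instead route the argument through the characterization of \cite{GR09} already cited above the lemma, namely $CP=SNSPACE(n\log n)$. It then suffices to show $RCP\subseteq SNSPACE(n\log n)$. Given an RCP protocol $\mathcal{A}$ computing $p$, a population configuration consists of $n$ agents, each holding a constant-size community state together with $\mathcal{O}(1)$ ids from $\{0,\ldots,n-1\}$; a configuration is therefore encodable in $\mathcal{O}(n\log n)$ bits. From input assignment $x\in\mathcal{X}$, the nondeterministic TM $\mathcal{N}$ first writes down the canonical initial configuration (assigning ids $0,1,\ldots,n-1$ to the agents in some fixed order, which is sound because $p$ is symmetric), and then performs the standard stable-computation check on the configuration graph.

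The only delicate point, and the one I flag as the main obstacle, is correctly encoding the semantics of \emph{stable} computation rather than mere acceptance: $p(x)=b$ means that every fair execution eventually reaches an output-stable configuration with common output $b$. I would handle this by the two-phase reachability verification familiar from \cite{GR09} and the population-protocol literature: $\mathcal{N}$ (i) nondeterministically guesses an interaction trace from the initial configuration to some configuration $C$ in which every agent outputs $b$, and (ii) verifies that \emph{no} configuration reachable from $C$ has any agent outputting $\neg b$, which is a co-nondeterministic reachability query of size $\mathcal{O}(n\log n)$ and therefore in $NSPACE(n\log n)$ by Immerman--Szelepcs\'enyi. Each phase reuses $\mathcal{O}(n\log n)$ space, so $\mathcal{N}$ decides $p$ in $NSPACE(n\log n)$. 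Since $p$ is symmetric, $p\in SNSPACE(n\log n)$, and invoking $SNSPACE(n\log n)\subseteq CP$ from \cite{GR09} yields $p\in CP$, completing the chain $RCP\subseteq SNSPACE(n\log n)\subseteq CP$ and therefore the lemma.
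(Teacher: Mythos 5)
Your proposal is correct, but for the nontrivial inclusion it takes a genuinely different route from the paper. The paper's proof is elementary and self-contained: since community protocols may only \emph{compare} ids, an order-preserving renaming of any id vector $id_0<id_1<\cdots<id_{n-1}$ to $0<1<\cdots<n-1$ induces identical (isomorphic) computations, so a protocol correct for one id set is correct for the other; nothing beyond the comparison-only restriction of the model is used. (Incidentally, your division of labour --- $CP\subseteq RCP$ trivial, $RCP\subseteq CP$ needing an argument --- is the semantically natural one; the paper labels the directions the other way around, but the renaming argument is the substantive content under either reading.) Your route instead proves $RCP\subseteq SNSPACE(n\log n)$ by the standard guess-a-trace-then-verify-stability construction (configurations of size $\mathcal{O}(n\log n)$, Immerman--Szelepcs\'enyi for the co-reachability phase --- essentially the same machinery as the paper's Theorem~\ref{the:upPLM}) and then closes the chain with the hard direction $SNSPACE(n\log n)\subseteq CP$ of \cite{GR09}. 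This is sound and not circular, since the characterization of \cite{GR09} is external to this lemma; what it costs is weight and constructiveness: it imports the storage-modification-machine simulation underlying $SNSPACE(n\log n)\subseteq CP$ (a dependence the paper elsewhere explicitly tries to avoid), and it yields for $p$ some possibly different community protocol rather than showing, as the renaming argument does directly, that the very same RCP protocol already works with arbitrary ids. A cosmetic point: fixing a canonical id order in your TM is justified because the restricted model quantifies over all assignments of $\{0,\ldots,n-1\}$ to agents, not really because $p$ is symmetric.
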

\begin{proof}
$RCP\subseteq CP$ holds trivially. It remains to show that $CP\subseteq RCP$. Since the community protocol model can only perform comparisons on ids, it follows that if we replace any vector of unique ids $(id_0,id_1,\ldots, id_{n-1})$ indexed by agents, where $id_0<id_1<\ldots<id_{n-1}$, by the unique ids $(0,1,\ldots,n-1)$ (thus preserving the ordering of the agents w.r.t. their ids) then the resulting computations in both cases must be identical.
\qed
\end{proof}

\begin{lemma}
$RCP$ is a subset of $IPLM$.
\end{lemma}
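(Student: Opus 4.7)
The plan is to give a direct simulation. Let $\mathcal{C}$ be a restricted community protocol stably computing the symmetric predicate $p$, so that in $\mathcal{C}$ each agent carries its own id from $\{0,1,\ldots,n-1\}$, a community state drawn from some finite set $S$, and a constant number $k$ of ``slots'' that store ids of other agents (or $\bot$). We will build an IPALOMA protocol $\mathcal{A}$ whose working tape encodes this entire community-protocol state and whose interactions faithfully reproduce $\mathcal{C}$'s transitions.

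First I would describe the static encoding. Since each stored id needs $\lceil \log n\rceil$ bits and there are only $k+1 = \mathcal{O}(1)$ id-fields to store, the total per-agent information is $(k+1)\lceil \log n\rceil + \mathcal{O}(1)$ bits, which fits in the $\mathcal{O}(\log n)$ working tape provided by IPALOMA. On initialization, each agent reads its sensed input $s\in X$, reads its read-only IPALOMA id $i$, and writes onto its working tape the initial $\mathcal{C}$-state: community state $\iota_\mathcal{C}(s)$, own-id field $i$, and $k$ empty slots. It also computes the output of $\mathcal{C}$ in its initial state and writes it on the output tape. Finally it copies its whole $\mathcal{C}$-state to the message tape and becomes ready.

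The core simulation step handles a single IPALOMA interaction between an initiator $u$ and a responder $\upsilon$. After the encounter, each of them finds the other's entire $\mathcal{C}$-state sitting on its own message tape (this is exactly what the IPALOMA semantics provides, since $|$message$|=\mathcal{O}(\log n)$ suffices). The external transition function $\gamma$ of $\mathcal{A}$ has only the trivial job of marking who is initiator and who is responder and setting the working flag to $1$; all real work is done internally. With flag $1$, each agent now executes, as a fixed subroutine of the finite control, the community-protocol transition $\delta_\mathcal{C}$: it scans its own working tape and its message tape, performs the (constantly many) id-comparisons that $\mathcal{C}$ prescribes, updates its community state and its $k$ id-slots according to $\delta_\mathcal{C}$, refreshes the output tape, copies the new $\mathcal{C}$-state back into the message tape, and sets the flag to $0$. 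All of this uses only $\mathcal{O}(\log n)$ extra workspace (a handful of pointers plus scratch room for comparisons), so it terminates in a bounded number of $\delta$-steps per interaction. Because IPALOMA executions are fair, the sequence of effective encounters between ready agents eventually realizes every reachable sequence of $\mathcal{C}$-interactions; in particular, any output-stable configuration of $\mathcal{C}$ on the input assignment $x$ is reached, after which all agents carry $p(x)$ on their output tapes.

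The main obstacle I anticipate is precisely the mismatch that $\mathcal{A}$'s external transition function $\gamma:Q\times Q\to Q\times Q$ operates only on finite states and cannot itself see the id-slots, whereas $\delta_\mathcal{C}$ needs to compare unbounded ids. The resolution, sketched above, is to push all id-dependent reasoning into the internal ($\delta$, working-flag $=1$) phase that follows each encounter, using the message tape as a transient buffer for the partner's full $\mathcal{C}$-state; the $\gamma$-step merely coordinates the handshake. A small technical point worth checking carefully is that the interaction of $\mathcal{C}$ is applied \emph{atomically} to both agents: in $\mathcal{A}$ the two updates are performed locally and independently, but since each is computed from the pair of $\mathcal{C}$-states that coexisted at the moment of the interaction (both captured by the just-exchanged message tapes), and since $\mathcal{C}$'s transition is deterministic in those two states and in the ordering initiator/responder, the resulting pair of new $\mathcal{C}$-states agrees exactly with what $\mathcal{C}$ would have produced. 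Once this is verified, $\mathcal{A}$ stably computes $p$, so $p\in IPLM$, which is the required inclusion $RCP\subseteq IPLM$. \qed
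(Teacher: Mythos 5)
Your proposal is correct and follows essentially the same route as the paper: the paper's proof is a one-sentence observation that an IPALOMA agent, already holding one of the ids $\{0,1,\ldots,n-1\}$, can store and compare the constantly many ids a restricted community protocol uses and thus simulate it directly. Your write-up just spells out the encoding ($\mathcal{O}(\log n)$ bits suffice for $k+1$ id-fields), the message-tape exchange, and the deferral of all id-comparisons to the internal phase — all consistent with the paper's intent.
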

\begin{proof}
PALOMA protocols that already have the unique ids $\{0,1,\ldots,n-1\}$ and know the population size can do whatever community protocols that have the same unique ids can, and additionally can perform operations on ids (they can store them in the agents' memories and perform some internal computation on them).
\qed
\end{proof}

Since, according to Theorem \ref{the:iplm}, $IPLM$ is equal to $PLM$, we have arrived to the following result.

\begin{theorem} \label{the:cp-plm}
$CP$ is a subset of $PLM$.
\end{theorem}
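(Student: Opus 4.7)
The plan is simply to chain together the three previously established facts, so the proof will be essentially a one-line composition of inclusions. First I would invoke the lemma that $CP = RCP$, which tells us it suffices to simulate the restricted community protocol model in which the unique ids are exactly $\{0,1,\ldots,n-1\}$. Then I would invoke the lemma $RCP \subseteq IPLM$, which says that an IPALOMA protocol, already endowed with these consecutive ids and knowledge of $n$, can directly perform any computation a restricted community protocol performs (in fact it can do strictly more, since it may also manipulate ids beyond mere comparison). Finally, I would invoke Theorem \ref{the:iplm}, which establishes $IPLM = PLM$ via the reinitiation construction, to lift the result from IPALOMA back to PALOMA.

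Concretely, the argument fits in a single displayed chain:
\[
CP \;=\; RCP \;\subseteq\; IPLM \;=\; PLM,
\]
and the statement follows. There is no real obstacle here, because all of the technical work has already been performed: the nontrivial content is hidden in Theorem \ref{the:iplm}, where the PALOMA model bootstraps itself into a state indistinguishable from an IPALOMA configuration by iteratively reinitiating computation until the ids stabilize at $\{0,1,\ldots,n-1\}$ and the correct value of $n$ propagates around the virtual ring. The only thing worth emphasizing in the writeup is that the inclusion $RCP \subseteq IPLM$ is genuinely an inclusion of classes of \emph{symmetric} predicates, so that composing with the symmetry-preserving equalities $CP = RCP$ and $IPLM = PLM$ is legitimate.

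Since Guerraoui and Ruppert have shown $CP = SNSPACE(n\log n)$, I would also note (either in the proof or as an immediate corollary remark) that this already improves the lower bound of Theorem \ref{the:spnlogn} from $SSPACE(n\log n)$ to $SNSPACE(n\log n)$, which is precisely the purpose of Section \ref{subsec:comm}. No further calculation or construction is required beyond citing the three preceding results.
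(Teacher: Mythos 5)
Your proposal is correct and is exactly the paper's argument: the proof there is the same one-line chain $CP = RCP \subseteq IPLM = PLM$, citing the two preceding lemmata and Theorem \ref{the:iplm}. Nothing further is needed.
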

\begin{proof}
Follows from $CP=RCP\subseteq IPLM=PLM$.
\qed
\end{proof}

\begin{theorem} \label{the:comm}
$SNSPACE(n\log n)$ is a subset of $PLM$.
\end{theorem}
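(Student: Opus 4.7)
\emph{Proof plan.} The statement follows almost immediately by chaining together two facts that are already in hand, so my plan is simply to make that chain explicit rather than to construct any new simulation from scratch. The hard work, namely showing that PALOMA can simulate community protocols and that community protocols in turn capture $SNSPACE(n\log n)$, has been carried out in the preceding sections and in \cite{GR09} respectively.

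Concretely, I would proceed in two short steps. First, invoke the characterization of Guerraoui and Ruppert \cite{GR09}, which tells us that $CP = SNSPACE(n\log n)$; this is precisely the statement cited just before Theorem \ref{the:cp-plm}. Second, apply Theorem \ref{the:cp-plm}, which established $CP \subseteq PLM$ by factoring the simulation through $RCP$ and $IPLM$ (using $RCP = CP$, $RCP \subseteq IPLM$, and $IPLM = PLM$ from Theorem \ref{the:iplm}). Chaining these gives
\[
SNSPACE(n\log n) \;=\; CP \;\subseteq\; PLM,
\]
which is exactly the desired inclusion.

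There is really no technical obstacle left at this point; the only thing to double-check is that the equality $CP = SNSPACE(n\log n)$ is being applied in the direction we need (namely $SNSPACE(n\log n) \subseteq CP$), and that the simulation in Theorem \ref{the:cp-plm} does not impose any extra hypothesis (such as knowledge of $n$) that would narrow the class of predicates covered. Since the restriction to $RCP$ in the proof of Theorem \ref{the:cp-plm} already absorbs the renaming of identifiers into $\{0,1,\ldots,n-1\}$, and Theorem \ref{the:iplm} removes the need to assume the population size is known a priori, the inclusion passes through unconditionally. Thus the proof can be written in essentially a single line, and I would present it in that form.
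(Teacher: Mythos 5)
Your proposal is correct and follows essentially the same route as the paper's own proof: cite \cite{GR09} for $SNSPACE(n\log n)\subseteq CP$ and then apply Theorem \ref{the:cp-plm} ($CP\subseteq PLM$, itself obtained via $CP=RCP\subseteq IPLM=PLM$). Your extra remarks about the direction of the equality and the absence of hidden hypotheses are sound but not needed beyond what the paper already records.
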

\begin{proof}
$SNSPACE(n\log n)$ is a subset of $CP$ \cite{GR09} and then we take into account Theorem \ref{the:cp-plm}.
\qed
\end{proof}

\subsection{The PALOMA Model Directly Simulates a Nondeterministic TM of $\mathcal{O}(n\log n)$ Space} \label{subsec:ntm}

Note that the proof of Theorem \ref{the:comm} depends on the following result of \cite{vEm89}: A Storage Modification Machine can simulate a Turing Machine. The reason is that \cite{GR09} provided an indirect proof of the fact that $SNSPACE(n\log n)$ is a subset of $CP$. In particular, it was proven that Community Protocols can simulate a Storage Modification Machine and then the result of \cite{vEm89} was used to establish that Community Protocols can simulate a nondeterministic TM. Here, and in order to avoid this dependence, we generalize the ideas used in the proof of Theorem \ref{the:iplm} and provide a direct simulation of a nondeterministic TM of $\mathcal{O}(n\log n)$ space by the PALOMA model, thus, providing an alternative proof for Theorem \ref{the:comm}.

\begin{theorem} \label{the:lowPLM}
$SNSPACE(n\log n)$ is a subset of $PLM$.
\end{theorem}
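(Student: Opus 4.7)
The plan is to invoke Theorem \ref{the:iplm} ($PLM=IPLM$), which reduces the task to showing $SNSPACE(n\log n)\subseteq IPLM$. Given a nondeterministic TM $\mathcal{M}$ of $\mathcal{O}(n\log n)$ space deciding a symmetric predicate $p$, I would build an IPALOMA protocol $\mathcal{A}$ that simulates $\mathcal{M}$. Since the agents already have the unique consecutive ids $\{0,1,\ldots,n-1\}$ and know $n$, I would reuse the distributed-tape layout from Theorem \ref{the:lowiplm}: $\mathcal{M}$'s tape is spread across the ring of agents, control is handed off to the neighboring agent whenever the simulated head crosses a block boundary, and a read-only backup of each input string is kept so the working portion of the tape can be restored on demand.

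The new ingredient is a mechanism for supplying $\mathcal{M}$'s nondeterministic choices via interactions. When the simulated machine reaches a state admitting several transitions, the agent currently simulating $\mathcal{M}$ stores a ``pending choice'' flag in its working block and becomes ready for interaction; the next interaction it participates in selects one of the admissible successors deterministically from some local feature of the encounter (e.g., the parity of the partner's id, or the initiator/responder role). Because the scheduler is fair, every finite sequence of such interaction outcomes is produced infinitely often, so every finite sequence of nondeterministic choices of $\mathcal{M}$ is realized by some execution of $\mathcal{A}$.

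To prevent the simulator from getting trapped on a non-halting or rejecting branch, I would equip it with a binary step counter capped at $c^{n\log n}$, the bound on the number of distinct configurations of $\mathcal{M}$; this counter consumes only $\mathcal{O}(n\log n)$ bits distributed across the agents, so it still fits in the available memory. Whenever the counter overflows or $\mathcal{M}$ explicitly rejects, the simulator clears the working blocks, restores the tape from the input backups, and restarts $\mathcal{M}$ from its initial configuration, producing a fresh nondeterministic run. Output tapes are initialized to $0$ and are flipped (irreversibly) to $1$ and propagated through interactions exactly when $\mathcal{M}$ reaches an accepting configuration. The main obstacle I expect is arguing stable computation: if $\mathcal{M}$ accepts, then by fairness some restart eventually produces an accepting choice sequence of length at most $c^{n\log n}$, a $1$ is written, and the irreversible propagation stabilizes all outputs at $1$; if $\mathcal{M}$ rejects, no run of the simulator ever reaches an accepting configuration and every output tape remains stably $0$. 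Symmetry of $p$ handles the fact that agent $i$ need not receive the $i$-th coordinate of the input, exactly as in the proof of Theorem \ref{the:lowiplm}.
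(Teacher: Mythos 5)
Your proposal is correct and follows essentially the same route as the paper: reduce to $IPLM$ via Theorem \ref{the:iplm}, extend the deterministic simulation of Theorem \ref{the:lowiplm}, let the id of the interaction partner resolve each nondeterministic choice, start all outputs at $0$, flip them irreversibly to $1$ upon acceptance, restart the simulation upon rejection, and argue stabilization by fairness. The only divergence is your distributed step counter, which the paper avoids by noting that $\mathcal{N}$ is a decider and always halts (a counter-like safeguard is only needed for the possibly-looping recognizers treated afterwards), so it is harmless but unnecessary here.
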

\begin{proof}
By considering Theorem \ref{the:iplm}, it suffices to show that $SNSPACE(n\log n)$ is a subset of $IPLM$. We have already shown that $IPALOMA$ can simulate a deterministic TM $\mathcal{M}$ of $\mathcal{O}(n\log n)$ space (Theorem \ref{the:lowiplm}). We now present some modifications that will allow us to simulate a nondeterministic TM $\mathcal{N}$ of the same memory size. Keep in mind that $\mathcal{N}$ is a decider for some predicate in $SNSPACE(n\log n)$, thus, it always halts. Upon initialization, each agent enters a reject state (writes $0$ to its output tape) and the simulation is carried out as in the case of $\mathcal{M}$.

Whenever a nondeterministic choice has to be made, the corresponding agent gets ready and waits for participating in an interaction. The id of the other participant will provide the nondeterministic choice to be made. One possible implementation of this idea is the following. Since there is a fixed upper bound on the number of nondeterministic choices (independent of the population size), the agents can store them in their memories. Any time a nondeterministic choice has to be made between $k$ candidates the agent assigns the numbers $0,1,\ldots,k-1$ to those candidates and becomes ready for interaction. Assume that the next interaction is with an agent whose id is $i$. Then the nondeterministic choice selected by the agent is the one that has been assigned the number $i \mod k$. Fairness guarantees that, in this manner, all possible paths in the tree representing $\mathcal{N}$'s nondeterministic computation will eventually be followed.

Any time the simulation reaches an accept state, all agents change their output to 1 and the simulation halts. Moreover, any time the simulation reaches a reject state, it is being re-initiated. The correctness of the above procedure is captured by the following two cases.
\begin{enumerate}
\item \emph{If $\mathcal{N}$ rejects then every agent's output stabilizes to $0$}. Upon initialization, each agent's output is $0$ and can only change if $\mathcal{N}$ reaches an accept state. But all branches of $\mathcal{N}$'s computation reject, thus, no accept state is ever reached, and every agent's output forever remains to $0$.
\item \emph{If $\mathcal{N}$ accepts then every agent's output stabilizes to $1$}. Since $\mathcal{N}$ accepts, there is a sequence of configurations $S$, starting from the initial configuration $C$ that leads to a configuration $C^{\prime}$ in which each agent's output is set to $1$ (by simulating directly the branch of $\mathcal{N}$ that accepts). Notice that when an agent sets its output to $1$ it never alters its output tape again, so it suffices to show that the simulation will eventually reach $C^{\prime}$. Assume on the contrary that it doesn't. Since $\mathcal{N}$ always halts the simulation will be at the initial configuration $C$ infinitely many times. Due to fairness, by an easy induction on the configurations of $S$, $C^{\prime}$ will also appear infinitely many times, which leads to a contradiction. Thus the simulation will eventually reach $C^{\prime}$ and the output will stabilize to $1$.
\end{enumerate}
\qed
\end{proof}

Note, also, that we have just provided an alternative way to prove Theorem \ref{the:cp-plm}. It is known \cite{AR07,GR09} that a nondeterministic TM of space $\mathcal{O}(n\log n)$ can simulate the community protocol model. But, according to Theorem \ref{the:lowPLM}, the PALOMA model can simulate such a TM, thus, it can indirectly simulate the community protocol model.

\subsubsection{Simulating Nondeterministic Recognizers of $\mathcal{O}(n\log n)$ Space}

\noindent \\ \\ Here, we generalize the preceding ideas to nondeterministic \emph{recognizers} of $\mathcal{O}(n\log n)$ space. There is a way to stably compute predicates in $SSPACE(n\log n)$ even when the corresponding TM $N$ might loop, by carrying out an approach similar to the one given above. However, since neither an accept nor a reject state may be reached, the simulation is nondeterministically re-initiated at any point that is not in such a state. This choice is also obtained by the nondeterministic interactions. For example, whenever the agent that carries out the simulation interacts with an agent that has an id that is even, the simulation remains unchanged, otherwise it is re-initiated. Notice however that during the simulation, any agent having id $i$ may need to interact with those having neighboring ids, so those must not be able to cause a re-initiation in the simulation.

Correctness of the above procedure is captured by similar arguments to those in the proof of Theorem \ref{the:lowPLM}. If $\mathcal{N}$ never accepts, then no output tape will ever contain a $1$, so the simulation stabilizes to $0$. If $\mathcal{N}$ accepts there is a sequence of configurations $S$, starting from the initial configuration $C$ that leads to a configuration $C^{\prime}$ in which each agent's output is set to $1$. Observe that this is a ``good'' sequence, meaning that no re-initiations take place, and, due to fairness, it will eventually occur.

\section{An Exact Characterization for $PLM$} \label{sec:exact}

We first give an upper bound on $PLM$.
\begin{theorem} \label{the:upPLM}
All predicates in $PLM$ are in the class $NSPACE(n\log n)$
\end{theorem}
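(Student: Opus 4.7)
The plan is to show that given any PALOMA protocol $\mathcal{A}$ stably computing a symmetric predicate $p$, there is a nondeterministic TM $N$ using $\mathcal{O}(n\log n)$ space that, on input assignment $x\in\mathcal{X}$, accepts iff $p(x)=1$. The crucial size observation is that a population configuration $C:V\to\mathcal{B}$ can be encoded in $\mathcal{O}(n\log n)$ cells, since each agent configuration $(q,l,r,f)$ consists of a constant-size state, a constant-size flag and two tape strings of length $\mathcal{O}(\log n)$. So $N$ can afford to keep one or two whole population configurations on its tape simultaneously.

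First I would record the following characterization of stable computation, which I expect to prove as a short lemma: $p(x)=1$ iff some output-stable configuration $C$ with every agent outputting $1$ is reachable from the initial configuration $C_0(x)$. The forward direction is immediate from the definition (pick any fair execution from $C_0(x)$; by stable computation it passes through such a $C$, which is therefore reachable). The backward direction uses the standard population-protocol argument: if two output-stable configurations with opposite outputs were both reachable from $C_0(x)$, one could concatenate a fair extension starting from the wrong one and violate the definition of stable computation.

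Given this, $N$ operates as follows on input $x$. It writes $C_0(x)$ on its tape in $\mathcal{O}(n\log n)$ space. It then nondeterministically guesses a target configuration $C$ in which every agent has $1$ on its output tape, and verifies reachability $C_0(x)\stackrel{*}{\to}C$ by the standard NSPACE configuration-graph search: repeatedly guess a next configuration $C_{i+1}$, check $C_i\to C_{i+1}$ using the transition functions $\delta,\gamma$ of $\mathcal{A}$ (which is an $\mathcal{O}(n\log n)$-space local check), overwrite $C_i$ with $C_{i+1}$, and continue until $C$ is reached. Finally $N$ must verify that $C$ is output-stable, i.e.\ that \emph{no} configuration $C'$ reachable from $C$ has some agent outputting a value different from $1$. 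This co-nondeterministic predicate is handled by invoking Immerman--Szelepcs\'enyi: since $\mathrm{NSPACE}(n\log n)=\mathrm{coNSPACE}(n\log n)$ for this space bound, the output-stability check can be performed in $\mathrm{NSPACE}(n\log n)$. Accept iff all three stages succeed.

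\textbf{Main obstacle.} The nontrivial ingredient is the output-stability test, because it is a universal statement over the (exponentially many in $n$) configurations reachable from $C$, whereas nondeterministic space natively gives us existential reachability. The cleanest way around this is the Immerman--Szelepcs\'enyi theorem applied to the configuration graph of $\mathcal{A}$ on $n$ agents, whose vertices have size $\mathcal{O}(n\log n)$; I would either cite this directly or, if a self-contained proof is preferred, sketch the inductive counting argument that enumerates, in $\mathcal{O}(n\log n)$ space, the exact number of configurations reachable from $C$ and confirms that none of them exhibits an output other than $1$. Everything else (encoding configurations, checking one-step yields, shuttling heads to simulate $\delta$ and $\gamma$) is routine bookkeeping within the declared space budget.
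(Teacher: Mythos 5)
Your proposal is correct and follows essentially the same route as the paper's own proof: encode population configurations in $\mathcal{O}(n\log n)$ space, nondeterministically guess-and-check a path from the initial configuration to an all-ones configuration, and handle the output-stability (universal) check via closure of nondeterministic space under complement (Immerman--Szelepcs\'enyi). Your explicit correctness lemma relating $p(x)=1$ to reachability of an output-stable all-ones configuration is left implicit in the paper but is the same underlying argument.
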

\emph{Proof Sketch}. The proof is similar to those that achieve the upper bounds of MPP \cite{CMS09-2} and Community Protocol \cite{AR07}. In particular, it suffices to show that the language corresponding to any predicate stably computable by the PALOMA model can be decided by a nondeterministic TM of $\mathcal{O}(n\log n)$ space. The TM guesses the next configuration and checks whether it has reached one that is output-stable. Note that $\mathcal{O}(n\log n)$ space suffices, because a population configuration consists of $n$ agent configurations each of size $\mathcal{O}(\log n)$.
\qed
\begin{proof}
Let $\mathcal{A}$ be a PALOMA protocol that stably computes such a predicate $p$. A population configuration can be represented as an $n-$place vector storing an agent configuration per place, and thus uses $\mathcal{O}(n\log n)$ space in total. The language $L$ derived from $p$ is the set of such strings that, when each agent receives a single string element, $p$ holds, that is, $L=\{(s_1,s_2,\ldots,s_n)\;|\; s_i\in X \mbox{ for all } i\in\{1,\ldots,n\} \mbox{ and } p(s_1,s_2,\ldots,s_n)=1\}$.

We will now present a nondeterministic Turing Machine $\mathcal{M_A}$ that decides $L$ in $\mathcal{O}(n\log n)$ space. To accept the input (assignment) $x$, $\mathcal{M_A}$ must verify two conditions: That there exists a configuration $C$ reachable from the initial configuration corresponding to $x$ in which the output tape of each agent indicates that $p$ holds, and that there is no configuration $C^{\prime}$ reachable from $C$ under which $p$ is violated for some agent.

The first condition is verified by guessing and checking a sequence of configurations. Starting from the initial configuration, each time $\mathcal{M_A}$ guesses configuration $C_{i+1}$ and verifies that $C_i$ yields $C_{i+1}$. This can be caused either by an agent transition $u$, or an encounter $(u,v)$. In the first case, the verification can be carried out as follows: $\mathcal{M_A}$ guesses an agent $u$ so that $C_i$ and $C_{i+1}$ differ in the configuration of $u$, and that $C_i(u)$ yields $C_{i+1}(u)$.  It then verifies that $C_i$ and $C_{i+1}$ differ in no other agent configurations. Similarly, in the second case $\mathcal{M_A}$ nondeterministically chooses agents $u$, $v$ and verifies that encounter $(u,v)$ leads to $C^{\prime}$ by ensuring that: (a) both agents have their working flags cleared in $C$, (b) the tape exchange takes place in $C^{\prime}$, (c) both agents update their states according to $\gamma$ and set their working flags to $1$ in $C^{\prime}$ and (d) that $C_i$ and $C_{i+1}$ differ in no other agent configurations. In each case, the space needed is $\mathcal{O}(n\log n)$ for storing $C_i$, $C_{i+1}$, plus $\mathcal{O}(\log n)$ extra capacity for ensuring the validity of each agent configuration in $C_{i+1}$.

If the above hold, $\mathcal{M_A}$ replaces $C_i$ with $C_{i+1}$ and repeats this step. Otherwise, $\mathcal{M_A}$ drops $C_{i+1}$. Any time a configuration $C$ is reached in which $p$ holds, $\mathcal{M_A}$ computes the complement of a similar reachability problem: it verifies that there exists no configuration reachable from $C$ in which $p$ is violated. Since $NSPACE$ is closed under complement for all space functions $\geq \log n$ (see Immerman-Szelepcs\'enyi theorem, \cite{Pa94}, pages $151-153$), this condition can also be verified in $\mathcal{O}(n\log n)$ space. Thus, $L$ can be decided in $\mathcal{O}(n\log n)$ space by some nondeterministic Turing Machine, so $L\in NSPACE(n\log n)$.
\qed
\end{proof}

\begin{theorem}
$PLM$ is equal to $SNSPACE(n\log n)$.
\end{theorem}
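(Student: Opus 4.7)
The plan is simply to consolidate the two bounds that have been established separately in the preceding sections. The containment $SNSPACE(n\log n) \subseteq PLM$ is exactly Theorem \ref{the:lowPLM}, proved by invoking $PLM = IPLM$ from Theorem \ref{the:iplm} to assume unique consecutive ids and knowledge of $n$, and then using these to orchestrate a distributed simulation of a nondeterministic TM of $\mathcal{O}(n\log n)$ space in which the nondeterministic branches are realized by the fair nondeterminism of the scheduler (concretely, by extracting the choice from the id of the next interaction partner modulo the number of candidates). So nothing new is needed for this direction.

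For the reverse inclusion, I would invoke Theorem \ref{the:upPLM}, which places every predicate in $PLM$ inside $NSPACE(n\log n)$ via the standard technique of nondeterministically guessing and verifying a trajectory of population configurations (each of size $\mathcal{O}(n\log n)$) to an output-stable configuration, and then checking non-reachability of a violating configuration using the Immerman--Szelepcs\'enyi closure of $NSPACE$ under complement. To upgrade this to $SNSPACE(n\log n)$, I would simply invoke the remark in Section \ref{sec:pred}: since the agents are initially identical, anonymous, and the communication graph is complete, any predicate stably computable by a PALOMA protocol must be invariant under permutation of inputs, hence symmetric. Combining the two inclusions yields $PLM = SNSPACE(n\log n)$. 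There is no real obstacle: the heavy lifting was carried out in Theorems \ref{the:iplm}, \ref{the:lowPLM}, and \ref{the:upPLM}, and the present theorem merely packages them as the advertised exact characterization.
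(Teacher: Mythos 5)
Your proposal is correct and follows essentially the same route as the paper: the lower bound is Theorem \ref{the:lowPLM} (the paper equivalently cites Theorem \ref{the:comm}), the upper bound is Theorem \ref{the:upPLM}, and the upgrade from $NSPACE(n\log n)$ to $SNSPACE(n\log n)$ uses the observation that every predicate in $PLM$ is symmetric. Nothing is missing.
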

\begin{proof}
Follows from Theorems \ref{the:comm} (or, equivalently, Theorem \ref{the:lowPLM}), which establishes that $SNSPACE($ $n\log n)\subseteq PLM$, and \ref{the:upPLM}, which establishes that $PLM\subseteq NSPACE(n\log n)$; but for all $p\in PLM$, $p$ is symmetric, thus, $PLM\subseteq SNSPACE(n\log n)$.
\qed
\end{proof}

\section{Conclusions - Future Research Directions} \label{sec:conc}

We proposed the PALOMA model, an extension of the PP model \cite{AADFP06}, in which the agents are communicating TMs of memory whose size is logarithmic in the population size. We focused on studying the computational power of the new model. Although the model preserves uniformity and anonymity, interestingly, we have been able to prove that the agents can \emph{organize themselves into a nondeterministic TM} that makes full use of the agents' total memory (i.e. of $\mathcal{O}(n\log n)$ space). The agents are initially identical, but by executing an \emph{iterative reinitiation process} they are able to assign \emph{unique consecutive ids} to themselves and get informed of the population size. In this manner, we showed that $PLM$, which is the class of predicates stably computable by the PALOMA model, contains all symmetric predicates in $NSPACE(n\log n)$. Finally, by upper bounding $PLM$, we concluded that it is precisely equal to the class consisting of all symmetric predicates in $NSPACE(n\log n)$.

Many interesting questions remain open. Is the PALOMA model \emph{fault-tolerant}? What preconditions are needed in order to achieve satisfactory fault-tolerance? Is it possible for the PALOMA model to simulate the MPP model \cite{CMS09-2}? To prove the latter, it would suffice to show that $NSPACE($ $n\log n)$ is an upper bound for $MP$ (the class of computable predicates by MPP). But we do not expect this to be easy, because it would require to prove that we can \emph{encode} the $\mathcal{O}(n^2)$ sized configurations of MPP by new configurations of $\mathcal{O}(n\log n)$ size whose transition graph is, in some sense, \emph{isomorphic} to the old one (e.g. the new configurations reach the same stable outputs). Finally, $\mathcal{O}(\log n)$ memory per agent seems to behave as a \emph{threshold}. Is there some sort of impossibility result showing that with $\mathcal{O}(f(n))$ memory, where $f(n)$ is asymptotically smaller than $\log n$, the class of stably computable predicates is strictly smaller than $NSPACE(nf(n))$? At a first glance, it seems that the agents are unable to store uids and get informed of the population size.








\removed{

\section{The PALOMA Model Possibly Simulates the MPP Model} \label{app:mpp}

Now we deal with the MPP model that was proposed in \cite{CMS09-2}. In particular, we present a first, however incomplete, attempt towards establishing that the PALOMA model simulates the MPP model. We consider the basic MPP model, that is, we assume that the communication graph is always complete. Moreover, we deal with predicates on agent input assignments. In particular, we assume that each agent receives an input symbol. All edges are initially in some special initial state $s_0$, do not receive input and do not produce output. The output of a protocol is the output of its agents' states. When an MPP stably computes such a predicate $p$ and the input assignment to the agents is $x$, then all agents eventually stabilize to the output $p(x)$ (predicate output convention \cite{AADFP06}).

\begin{definition}
Let $MP$ denote the class of all symmetric predicates $p$ on input assignments that are stably computable by the basic MPP model.
\end{definition}

In \cite{CMS09-2} it was proven that any predicate in $MP$ belongs to $NSPACE(m)$, where $m$ denotes the number of edges of the communication graph. The idea was to use a nondeterministic Turing Machine in order to simulate the MPP model. That machine used at most $\mathcal{O}(m)$ space, because a network configuration $C$ is a vector $((q_i)_{i=1}^n,(s_j)_{j=1}^m)$, that is, it stores an agent state $q_i\in Q$ for each agent $i$ in the population $\{1,\ldots,n\}$ and an edge state $s_j\in S$ for each edge of the communication graph, e.g. $E=\{1,\ldots,m\}$, and the machine was constructed to always store at most one such configuration.

\subsubsection{Relaxed Configurations}

Here we define some relaxed configurations for the MPP model. Given a network configuration $C=((q_i)_{i=1}^n,(s_j)_{j=1}^m)$, the \emph{relaxed configuration} of $C$, denoted $r(C)$, is again a vector with two components, and is of the form $r(C)=((q_i)_{i=1}^n,(t_{iqs})_{i\in\{1,\ldots,n\},q\in Q, s\in S})$, where $t_{iqs}=|T_{iqs}^C|$ and $T_{iqs}^C=\{(i,j)\in E\; |\; C(j)=q \mbox{ and } C(i,j)=s\}$, that is, $t_{iqs}$ is equal to the number of edges in state $s$ leaving from agent $i$ and leading to some agent in state $q$. In simple words, a relaxed configuration $r(C)$, stores for each agent $i$ the state of $i$ and for each pair $(q,s)$ of agent states and edge states the number of edges in state $s$ leaving from $i$ and leading to some agent in state $q$.

\begin{remark}
Any relaxed configuration occupies $\mathcal{O}(n\log n)$ space.
\end{remark}
\begin{proof}
It stores a state per agent, thus for this part $\mathcal{O}(n)$ space is needed. Now, the 3-dimensional matrix $(t_{iqs})$ has $n\times |Q|\times |S|$ entries, that is, $\mathcal{O}(n)$ entries, because uniformity implies that $|Q|=\mathcal{O}(1)$ and $|S|=\mathcal{O}(1)$. Each entry is a nonnegative integer of value at most $n-1$, because each agent has precisely $n-1$ out-neighbors (the communication graph is complete), that is, $\mathcal{O}(\log n)$ bits are needed to represent it. Thus, the total space is $\mathcal{O}(n \log n)$.
\qed
\end{proof}

\begin{remark}
Any configuration $C$ corresponds to a relaxed configuration $r(C)$, while on the other hand, a relaxed configuration $B$ (we use $C$ for configurations and $B$ for relaxed configurations) may correspond to many configurations $C$, that is, for many configurations $C$ it may hold that $r(C)=B$.
\end{remark}
In other words, given a relaxed configuration $B$, if we denote by $w(B)$ the set of configurations $C$ for which it holds that $r(C)=B$, then it may hold that $|w(B)|>1$.

We first show that the above remark does not hold for initial configurations. Let $x=(x_i)_{i=1}^n$, where $x_i\in X$ ($X$ here denotes the input alphabet of some MPP), be some input assignment. The initial configuration corresponding to $x$ is $I(x)=((I(x_i))_{i=1}^n,(s_0)_{j=1}^m)$, that is, each agent $i$ begins from state $I(x_i)$ by applying the input function to its input symbol and all edges are initially in state $s_0$. The initial relaxed configuration corresponding to $x$ is $B_x=((I(x_i))_{i=1}^n,(t_{iqs})_{i\in\{1,\ldots,n\},q\in Q, s\in S})$, where $t_{iqs}=0$ for all $s\in S-\{s_0\}$ and $t_{iqs_0}$ is equal to the number of agents in $V-\{i\}$ that have obtained state $q$ (by applying the input function $I$ to $x$'s components).

\begin{lemma}
For any initial configurations $C_0$ and $C_0^{\prime}$, such that $C_0\neq C_0^{\prime}$, it holds that $r(C_0)\neq r(C_0^{\prime})$.
\end{lemma}
\begin{proof}
Assume not. Then for two initial configurations $C_0$ and $C_0^{\prime}$, such that $C_0\neq C_0^{\prime}$, it holds that $r(C_0)=r(C_0^{\prime})$. This implies that all agents have precisely the same states under both $C_0$ and $C_0^{\prime}$, because for the relaxed configurations that correspond to them to be equal it must hold that their $(q_i)_{i=1}^n$ components are equal. But then it must also hold that $C_0=C_0^{\prime}$, because they have the same agent states and the same edge states (all edges in all initial configurations are in state $s_0$), and this is clearly a contradiction.
\qed
\end{proof}

We now define the binary relation ``\emph{can go in one step to}'' over the set of relaxed configurations. To simplify the definitions we only consider protocols for which it holds that if $k=(a,b,c,a^{\prime},b^{\prime},c^{\prime})\in \Delta$ (where $\Delta\subseteq Q^2\times S\times Q^2\times S$ is the transition relation analogue of the transition function $\delta: Q\times Q\times S\rightarrow Q\times Q\times S$) then $a,b\neq a^{\prime}\neq b^{\prime}$ and $c\neq c^{\prime}$. In the end we will show that this is w.l.o.g. because any MPP has an equivalent protocol that satisfies this restriction. We say that a relaxed configuration $B=((q_i)_{i=1}^n,(t_{iqs})_{i\in\{1,\ldots,n\},q\in Q, s\in S})$ \emph{can go in one step to} the relaxed configuration $B^{\prime}=((q^{\prime}_i)_{i=1}^n,(t^{\prime}_{iqs})_{i\in\{1,\ldots,n\},q\in Q, s\in S})$ \emph{via rule} $k=(a,b,c,a^{\prime},b^{\prime},c^{\prime})\in \Delta$ and write $B\stackrel{k}\rightarrow B^{\prime}$ if
\begin{itemize}
\item $\exists i,j\in \{1,\ldots,n\}$ s.t. $(q_i,q_j)=(a,b)$ and $(q_i^{\prime},q_j^{\prime})=(a^{\prime},b^{\prime})$ and $\forall l\in \{1,\ldots,n\}-\{i,j\}$ $q_l^{\prime}=q_l$,
\item $t_{ibc}\geq 1$,
\item $t^{\prime}_{ib^{\prime}c^{\prime}}=t_{ib^{\prime}c^{\prime}}+1$, $t^{\prime}_{ibc}=t_{ibc}-1$, and $t^{\prime}_{iqs}=t_{iqs}$ for all $(q,s)\in Q\times S-\{(b^{\prime},c^{\prime}),(b,c)\}$,
\item $\exists s\in S$ s.t. $t^{\prime}_{ja^{\prime}s}=t_{ja^{\prime}s}+1$, $t^{\prime}_{jas}=t_{jas}-1$, and $t^{\prime}_{jqs}=t_{jqs}$ for all $(q,s)\in Q\times S-\{(a^{\prime},s),(a,s)\}$, and
\item $\forall l\in \{1,\ldots,n\}-\{i,j\}$
  \begin{enumerate}
  \item if $a\neq b$: $\exists s\in S$ s.t. $t^{\prime}_{la^{\prime}s}=t_{la^{\prime}s}+1$, $t^{\prime}_{las}=t_{las}-1$, and $t^{\prime}_{lqs^{\prime\prime}}=t_{lqs^{\prime\prime}}$ for all $(q,s^{\prime\prime})\in Q\times S-\{(a,s),(a^{\prime},s)\}$, and $\exists s^{\prime}\in S$ s.t. $t^{\prime}_{lb^{\prime}s^{\prime}}=t_{lb^{\prime}s^{\prime}}+1$, $t^{\prime}_{lbs^{\prime}}=t_{lbs^{\prime}}-1$, and $t^{\prime}_{lqs^{\prime\prime}}=t_{lqs^{\prime\prime}}$ for all $(q,s^{\prime\prime})\in Q\times S-\{(b,s^{\prime}),(b^{\prime},s^{\prime})\}$.
  \item if $a=b$: the same as above if $s\neq s^{\prime}$ or $\exists s\in S$ s.t. $t^{\prime}_{la^{\prime}s}=t_{la^{\prime}s}+1$, $t^{\prime}_{lb^{\prime}s}=t_{lb^{\prime}s}+1$, $t^{\prime}_{las}=t_{las}-2$, and $t^{\prime}_{lqs^{\prime\prime}}=t_{lqs^{\prime\prime}}$ for all $(q,s^{\prime\prime})\in Q\times S-\{(a,s),(a^{\prime},s),(b^{\prime},s)\}$.
  \end{enumerate}
\end{itemize}
We say that a relaxed configuration $B$ \emph{can go in one step to} a relaxed configuration $B^{\prime}$, and write $B\rightarrow B^{\prime}$, if there exists some $k\in \Delta$ for which $B\stackrel{k}\rightarrow B^{\prime}$ is satisfied. We say that $B^{\prime}$ \emph{is reachable from} $B$, and write $B\stackrel{*}\rightarrow B^{\prime}$, if there exists a sequence of configurations $B=B_0,B_1,\ldots,B_t=B^{\prime}$ such that $B_i\rightarrow B_{i+1}$ for all $i=0,1,\ldots,t-1$.

\begin{lemma} \label{lem:one-step}
Let $C$ and $C^{\prime}$ be configurations. If $C\rightarrow C^{\prime}$ then $r(C)\rightarrow r(C^{\prime})$.
\end{lemma}
\begin{proof}
$C\rightarrow C^{\prime}$ implies that $\exists e=(i,j)\in E$ such that $C\stackrel{e}\rightarrow C^{\prime}$. This in turn implies by definition that
\begin{align*}
C^{\prime}(i)&=\delta_{1}(C(i),C(j),C(e)),\\
C^{\prime}(j)&=\delta_{2}(C(i),C(j),C(e)),\\
C^{\prime}(e)&=\delta_{3}(C(i),C(j),C(e)), and\\
C^{\prime}(z)&=C(z), \mbox{ for all } z\in (V-\{u,\upsilon\})\cup (E-e).
\end{align*}
It is easy now to show that $r(C)\stackrel{k}\rightarrow r(C^{\prime})$, where $k=(C(i),C(j),C(e),C^{\prime}(i),C^{\prime}(j),C^{\prime}(e))\in\Delta$.
\qed
\end{proof}

\begin{lemma} \label{lem:reach}
Let $C$ and $C^{\prime}$ be configurations. If $C\stackrel{*}\rightarrow C^{\prime}$ then $r(C)\stackrel{*}\rightarrow r(C^{\prime})$.
\end{lemma}
\begin{proof}
There exists a sequence of configurations $C=C_0,C_1,\ldots,C_t=C^{\prime}$ such that $C_i\rightarrow C_{i+1}$ for all $i=0,1,\ldots,t-1$. But Lemma \ref{lem:one-step} guarantees that $r(C_i)\rightarrow r(C_{i+1})$ also holds, for all $i=0,1,\ldots,t-1$. Thus, by definition of ``$\stackrel{*}\rightarrow$'', $r(C)\stackrel{*}\rightarrow r(C^{\prime})$ is satisfied.
\qed
\end{proof}

\begin{lemma}
If a relaxed configuration $B$ is output-stable then for all $C\in w(B)$ it holds that $C$ is also output-stable giving the same output assignment as $B$.
\end{lemma}
\begin{proof}
That $C\in w(B)$ gives the same output assignment as $B$ is trivial; their agent-state components are the same from definition of relaxed configurations. Now assume that there exists some $C\in w(B)$ which is not output-stable. There must be some $C^{\prime}$ that is reachable from $C$ for which $o(C^{\prime})\neq o(C)$, where $o(C):V\rightarrow Y$ denotes the output assignment of configuration $C$ ($Y$ is the protocols's set of output symbols). But according to Lemma \ref{lem:reach}, $r(C^{\prime})$ must also be reachable from $r(C)=B$. Moreover, $o(r(C^{\prime}))=o(C^{\prime})\neq o(C)=o(r(C))=o(B)$ (here we also use $o$ for output assignments of relaxed configurations), which implies that $B$ cannot be output-stable, a fact that contradicts our initial assumption. Thus, for all  $C\in w(B)$ it must hold that $C$ is output-stable.
\qed
\end{proof}


}

\end{document}